\newtheorem{theorem}{Theorem}
\newtheorem{lemma}{Lemma}
\newtheorem{observation}{Observation}
\newtheorem{corollary}{Corollary}
\theoremstyle{definition}
\theoremstyle{remark}
\def\polylog{\operatorname{polylog}}
\DeclareMathOperator{\opt}{OPT}
\DeclareMathOperator{\alg}{ALG}
\DeclareMathOperator{\algb}{ALG-B}
\DeclareMathOperator{\algbon}{ALG-B_{ON}}
\DeclareMathOperator{\algboff}{ALG-B_{OFF}}
\DeclareMathOperator{\algon}{ALG_{ON}}
\DeclareMathOperator{\algoff}{ALG_{OFF}}
\DeclareMathOperator{\lowerbound}{\Omega\left(\frac{1}{\epsilon}m^{\log\left(\frac{3 + 2\epsilon}{2 + 2\epsilon}\right)}\right)}
\newcommand{\doubleR}{\mathbb{R}}
\newcommand{\doubleN}{\mathbb{N}}
\newcommand{\offlinemetric}{\textit{Minimum Metric Perfect Matching}}
\newcommand{\str}[1] {#1^*}
\algnewcommand\algorithmicforeach{\textbf{for each}}
\title{Deterministic Min-Cost Matching with Delays}
\author{
    Yossi Azar\thanks{School of Computer Science, Tel Aviv University, Israel; azar@tau.ac.il}\and
    Amit Jacob-Fanani\thanks{School of Computer Science, Tel Aviv University, Israel; amitj@mail.tau.ac.il}
}
\date{}
\begin{document}

\maketitle

\begin{abstract}
We consider the online
Minimum-Cost Perfect Matching with Delays (MPMD) problem introduced by Emek et al. (STOC 2016),
in which a general metric space is given, 
and requests are submitted in different times in this space by an adversary.
The goal is to match requests, while minimizing the sum of distances between matched pairs
in addition to the time intervals passed from the moment each request appeared until it is matched.

In the online Minimum-Cost Bipartite Perfect Matching with Delays (MBPMD) problem
introduced by Ashlagi et al. (APPROX/RANDOM 2017), 
each request is also associated with one of two classes,
and requests can only be matched with requests of the other class.

Previous algorithms for the problems mentioned above, 
include randomized $O\left(\log n\right)$-competitive algorithms
for known and finite metric spaces, $n$ being the size of the metric space,
and a deterministic $O\left(m\right)$-competitive algorithm,
$m$ being the number of requests.

We introduce $O\left(m^{\log\left(\frac{3}{2}+\epsilon\right)}\right)$-competitive 
deterministic algorithms for both problems and for any fixed $\epsilon > 0$. In particular, for a small enough $\epsilon$
the competitive ratio becomes $O\left(m^{0.59}\right)$.
These are the first deterministic algorithms for the mentioned online matching problems,
achieving a sub-linear competitive ratio. 
Our algorithms do not need to know the metric space in advance.

\end{abstract}

\section{Introduction}
In the algorithmic graph theory, a \textit{Perfect Matching} is a subset of graph edges, 
in which each vertex of the graph is incident on exactly one edge of the subset,
and the weight of the matching is the sum of the weights of the edges of the subset.
In the well known \textit{Minimum-Cost Perfect Matching} problem a weighted graph is given,
and a \textit{Perfect Matching} of minimum weight is to be found.
The \textit{Blossom Algorithm} due to Edmonds~\cite{Edmonds_CJM65} 
is the first algorithm to solve this problem in polynomial time.

Many versions of the \textit{Minimum-Cost Perfect Matching} 
problem have been studied over the last few decades,
some of the noticeable variants are online versions of the problem 
(e.g. \textit{Minimum-Cost Perfect Matchings with Online Vertex Arrival} 
due to Kalyanasundaram and Pruhs~\cite{KalyanasundaramP_JAlgorithms93}).

In this paper we suggest a deterministic algorithm for the 
\textit{Minimum-Cost Perfect Matching with Delays} (MPMD) variant, 
which was introduced by Emek et al.~\cite{EmekKW_STOC16},
and a similar deterministic algorithm for another variation of the problem -
the \textit{Minimum-Cost Bipartite Perfect Matching with Delays} (MBPMD) problem,
which was introduced by Ashlagi et al.~\cite{Ashlagi17}.

To illustrate the MPMD problem, 
imagine players logging in through a server to an online game at different times,
unknown a priori to the server they have connected through.
The server then needs to match between the players while maximizing
their satisfaction from playing the game.
Players feel satisfied when they play against players at a level similar to their own.
Therefore, when pairing players, the server needs to consider the difference in levels between the players, called the \textit{connection cost}.

Once logged in, a player doesn't necessarily start playing instantly,
as the server can postpone the decision regarding with whom to match the player, until a good match is found (i.e. another player at a similar level logs in to the game).
This is a poor strategy since players are unhappy when forced to wait too long until they start playing.
The time a player has to wait until the game starts is called the \textit{delay cost}.

More formally, an adversary presents requests at points in a general metric space,
in an online manner.
The goal is to produce a minimum-cost perfect matching
when the cost of an edge is the sum of its \textit{connection cost} (the distance between the two points in the metric space) 
and the \textit{delay cost} of the two requests matched by the edge.
All requests have to be matched by the server after a finite time from the moment they have arrived.

The MBPMD problem is an extension of the MPMD problem (due to Ashlagi et al.~\cite{Ashlagi17}),
in which each of the requests may take one of two colors,
and each edge of the matching, must be incident on one request from each color.
The MBPMD problem has many applications, such as matching drivers to passengers (Uber, Lyft), job finding platforms, etc. 

\paragraph*{Background.}
The standard method used to measure an online algorithm's performance is its competitive ratio.
We use this method when comparing the performance of matching algorithms for both MPMD and MBMPD.
An algorithm is $\alpha$-competitive 
if the maximum ratio between the cost of the algorithm
to the cost of the optimum solution, over all inputs, is bounded by $\alpha$.

The first algorithm for MPMD was developed by Emek et al.~\cite{EmekKW_STOC16} with an expected competitive ratio $O\left(\log^2n+\log\Delta\right)$
on a finite metric space of size $n$, where $\Delta$ is the aspect-ratio of the metric space
(the ratio of the maximum distance to the minimum distance between any two points in the metric space).
Azar et al.~\cite{Azar17} improved the competitive ratio to $O\left(\log n\right)$,
and showed a lower bound of $\Omega\left(\sqrt{\log n}\right)$ (both deterministic and randomized).
Ashlagi et al.~\cite{Ashlagi17} improved this lower bound to $\Omega\left(\frac{\log n}{\log\log n}\right)$ (both deterministic and randomized).
They also gave an $O\left(\log n\right)$-competitive randomized algorithm for MBPMD.

All mentioned above algorithms are randomized (on a general finite metric).
In online algorithms where one cannot repeat the algorithm in case the cost is high, 
a deterministic algorithm is preferable.
Bienkowski et al.~\cite{BienkowskiKS17} provided the first deterministic algorithm for MPMD on general metrics,
with a competitive-ratio of $O\left(m^{2.46}\right)$, $m$ being the number of requests.
While the previous algorithms require the metric space to be known a priori, 
their algorithm does not, and is also applicable when the metric space is revealed in an online manner.
Bienkowski et al. also noted that the algorithm of~\cite{Azar17} 
can be used to provide an $O\left(n\right)$-competitive deterministic algorithm
for a general known metric space.
Recently, Bienkowski et al.~\cite{Bienkowski18} provided a new primal-dual deterministic algorithm
for MPMD on general metrics,
with a competitive-ratio of $O\left(m\right)$, $m$ being the number of requests.

Prior to our result there was no deterministic sub-linear competitive algorithm, 
neither in $n$ nor in $m$.

\paragraph*{Our Contribution.}
In this paper we introduce deterministic algorithms for both versions of the problem,
both with a competitive ratio $O\left(\frac{1}{\epsilon}m^{\log\left(\frac{3}{2}+\epsilon\right)}\right)$. 
When the constant $\epsilon$ is small enough,
this becomes $O\left(m^{0.59}\right)$.
Our algorithms do not need to know the metric space in advance.

We present a simple algorithm, 
which is an adaptation of the greedy algorithm for the \textit{Minimum-Cost Perfect Matching} problem by Reingold and Tarjan~\cite{ReingoldT81} to an online environment.
In our algorithm, requests grow hemispheres around them in a metric 
that is the Cartesian product of the original metric and the time axis (also called the \textit{time-augmented metric space}).
The hemispheres radii grow slowly in the negative direction of the time axis.
Once a request is found on the boundary of another request's hemisphere,
they are matched by the algorithm. Our analysis is inspired by the analysis of 
the original greedy algorithm by Reingold and Tarjan.

In the bipartite case, the algorithm is essentially the same,
but requests are matched only if they are of different colors.

\paragraph*{Related Work.}
First we consider related work \textbf{with delays}.
Since Emek et al.~\cite{EmekKW_STOC16} introduced the notion of online problems with delayed service,
there has been a growing number of works studying such problems~ (e.g. 
\textit{Online Service with Delays} \cite{Azar17Service},
\textit{Minimum-Cost Bipartite Perfect Matching with Delays}~\cite{Ashlagi17},
\textit{Minimum-Cost Perfect Matching with Delays for Two Sources}~\cite{EmekSW17}).
Works dealing with the \textit{Minimum-Cost Perfect Matching with Delays} and \textit{Minimum-Cost Bipartite Perfect Matching with Delays} problems,
such as the papers by 
Emek et al.~\cite{EmekKW_STOC16}, 
Azar et al.~\cite{Azar17}, 
Ashlagi et al.~\cite{Ashlagi17} 
and Bienkowski et al.~\cite{BienkowskiKS17},
are the most closely related to this work.
As mentioned above, Emek et al.~\cite{EmekKW_STOC16} provided a randomized $O\left(\log^2n+\log\Delta\right)$-competitive algorithm for MPMD on general metrics,
in which $n$ is the size of the metric space and $\Delta$ is the aspect ratio.
They consider the randomized embeddings of the general metric space
into a distribution over metrics given by hierarchically separated full binary trees,
with distortion $O\left(\log n\right)$, and give a randomized algorithm for the
hierarchically separated trees metrics.

Subsequently, Azar et al.~\cite{Azar17} provided a randomized $O\left(\log n\right)$-competitive algorithm for the same problem, 
thus improving the original upper bound. 
They used randomized embedding of the general metric space
into a distribution over metrics given by hierarchically separated trees 
of height $O\left(\log n\right)$, with distortion $O\left(\log n\right)$.
Then they give a deterministic $O\left(1\right)$-space-competitive (that is the competitive ratio associated with the \textit{connection cost})
and $O\left(h\right)$-time-competitive (that is the competitive ratio associated with the \textit{delay cost}) 
algorithm over tree metrics, where $h$ is the height of the tree. 
This yields a competitive ratio of $O\left(\log n\right)$.
Moreover, they provided a randomized $\Omega\left(\sqrt{\log n}\right)$ lower bound, 
confirming a conjecture made by Emek et al.~\cite{EmekKW_STOC16} that the competitive ratio of any online
algorithm for the problem must depend on $n$.

Ashlagi et al.~\cite{Ashlagi17} improved the lower bound
on the competitive ratio to $\Omega\left(\frac{\log n}{\log\log n}\right)$, 
almost matching the upper bound of Azar et al. of $O\left(\log n\right)$.
The rest of the paper focuses on the bipartite version of the problem,
providing an $O\left(\log n\right)$-competitive ratio by the adaptation of the algorithm
of Azar et al.~\cite{Azar17} to the bipartite case.

In order to provide a \textit{deterministic} algorithm, 
Bienkowski et al.~\cite{BienkowskiKS17} used a different approach for the problem - 
they used a semi-greedy scheme of a ball-growing algorithm. 
In their analysis, they fix an optimal matching, 
and charge the cost of each matching-edge generated by their algorithm
against the cost of an existing matching-edge of the optimal matching.
As mentioned above, their algorithm achieves a competitive ratio of $O\left(m^{2.46}\right)$, 
where $m$ is the number of requests.

Bienkowski et al. improved this result in~\cite{Bienkowski18} 
by providing a new $O(m)$-competitive LP-based algorithm. 
Briefly, their algorithm maintains a primal relaxation of the matching problem and its dual 
(the programs evolve in time as more requests arrive). 
Dual variables are increased along time, until a dual constraint (corresponding to a pair of requests) 
becomes tight, which results in the algorithm connecting the pair. 
They also proved that their analysis is tight (the competitive-ratio of their algorithm is $\Omega(m)$).
Recall that our algorithm acheives a sub-linear competitive-ratio (in $m$).

Next we consider related work \textbf{without delays}.
The \textit{Online Minimum Weighted Bipartite Matching} (OMM) problem 
due to~\cite{KalyanasundaramP_JAlgorithms93, KhullerMV91} is another important online 
version of the \textit{Minimum-Cost Perfect Matching} problem,
in which $k$ vertices are given a priori, 
and $k$ additional vertices are revealed at different times, 
together with the distances from the first $k$ vertices.
The algorithm then needs to match the later $k$ vertices
to the first $k$ vertices, while trying to minimize
the total weight of the produced matching.
In this version, delay of the algorithm's decision is not available.
Kalyanasundaram and Pruhs~\cite{KalyanasundaramP_JAlgorithms93}
and Khuller et al.~\cite{KhullerMV91} showed independently
a tight upper and lower bounds of $2k-1$ 
on the deterministic competitive ratio of the problem.

The first sub-linear competitive randomized algorithm for the problem,
was given by Meyerson et al.~\cite{MeyersonNP06} using randomized embeddings into trees, 
with a competitive ratio of $O(\log^3 k)$. 
Consequently, Bansal et al.~\cite{BansalBGN_Algorithmica14} improved this upper bound
by providing a $O(\log^2k)$-competitive randomized algorithm. 
In addition, they showed an $\Omega(\log k)$ lower bound on the competitive ratio
for randomized algorithms.

The special case of line-metrics 
is argued to be the most interesting instance of OMM (e.g.~\cite{KoutsoupiasN03}).
Kalyanasundaram and Pruhs conjectured in 1998~\cite{kalyanasundaram_pruhs_1998} 
that there exists a 9-competitive deterministic
algorithm for OMM on line-metrics,
but in 2003 Fuchs et al.~\cite{FuchsHK03} disproved the conjecture,
proving a lower bound of 9.001 for deterministic algorithms.
This is the best known lower bound thus far.

Antoniadis et al.~\cite{AntoniadisBNPS14} presented the first
sub-linear deterministic algorithm for line-metrics,
with a competitive ratio of $O\left(\frac{1}{\epsilon}k^{\log\left(\frac{3}{2}+\epsilon\right)}\right)$.
Recently, Nayyar and Raghvendra \cite{NayyarR17}
improved this upper bound to $O(\log^2 k)$ by careful analysis of 
the deterministic algorithm present in \cite{raghvendra16}.
Gupta and Lewi~\cite{GuptaL12} provided a randomized $O(\log k)$-competitive algorithm
for doubling metrics, hence for line-metrics as well.

To summarize, 
the best known deterministic upper bound on the competitive ratio for line-metrics is $O(\log^2 k)$,
and best known lower bound is 9.001.
For randomized algorithms the best known upper bound is $O(\log k)$.

\paragraph*{Paper Organization.}
We describe the algorithm for \textit{Minimum-Cost Perfect Matching with Delays} in Section~\ref{sec_det_alg}
and analyze its performance in Section~\ref{sec_analysis}.
Through an example in Appendix~\ref{sec_lower_bound} we show that our analysis is tight, 
and prove that the competitive ratio of our algorithm indeed depends on the number of requests, 
and not on the size of the metric space. 
In addition, we show in Appendix~\ref{sec_time} that minor natural changes to the algorithm, 
do not transform the competitive ratio into a function of the size of the metric space 
(in the case of a finite metric space) instead of the number of requests.
In Section~\ref{sec_bipartite} we present the algorithm for \textit{Minimum-Cost Bipartite Perfect Matching with Delays} and analyze its performance.

\section{Preliminaries}\label{sec_prelim}
A \textit{metric space} $\mathcal{M} = (S,d)$
is a set $S$ and a distance function $d:S\times S\longrightarrow\mathbb{R}^+$
that meets the following conditions: non-negativity, symmetry, the triangle-inequality, and that $d(x, y) = 0$ if and only if $x = y$.
When $S$ is finite, we refer to $\mathcal{M}$ as a \textit{finite metric space},
and an \textit{infinite metric space} otherwise.
\subsection{Model}
In the online \textit{Minimum-Cost Perfect Matching with Delays} problem on a metric space
$\mathcal{M}=(S,d)$ (known a priori to the algorithm),
an input instance $\mathcal{I}=\langle r_i\rangle_{i=1}^m$ 
is presented to the algorithm in an online fashion,
so that each request $r_i$ is revealed to the algorithm at time $t(r_i)$ at the location $x(r_i)\in S$.
The number of requests $m$ is even and unknown a priori to the algorithm.

The online algorithm should produce a perfect matching in real time.
Formally, two requests $p, q$ can be matched by the algorithm at any time $t \geq \max(t(p), t(q))$, 
if they have not been matched yet by the algorithm.

Let $\langle p_i,q_i,t_i\rangle_{i=1}^\frac{m}{2}$ be the set of pairs of requests matched by the algorithm,
and their matching times ($p_i$ and $q_i$ were matched by the algorithm at $t_i$),
then the cost of the matching produced by the algorithm is
\begin{equation*}
    \sum_{i=1}^{\frac{m}{2}} d(x(p_i), x(q_i)) + |t_i - t(p_i)| + |t_i - t(q_i)| 
\end{equation*}
In other words, 
the cost is the sum of the connection cost of all matched pairs 
in addition to the sum of the delay cost of all requests.
The goal of the algorithm is to minimize this cost.

The \textit{Minimum-Cost Bipartite Perfect Matching with Delays}
is virtually the same problem as the \textit{Minimum-Cost Perfect Matching with Delays} problem, 
except that each request $r_i$ is associated with one of two classes,
so that each request $r_i$ can be matched to a request $r_j$ 
if and only if $class(r_i)\neq class(r_j)$.

\subsection{The time-augmented metric space}
Given a metric space $\mathcal{M}=(S,d)$ 
define the \textit{time-augmented metric space} as $\mathcal{M}_T=(S\times \doubleR, D)$ 
where $D$ is a distance function defined as
\begin{equation*}
    D\left(\left(l_1,t_1\right),\left(l_2,t_2\right)\right) = d(l_1,l_2) + |t_1 - t_2|
\end{equation*} 
assuming $(l_1,t_1), (l_2,t_2)\in S\times \doubleR$.
That is, the time axis was added as another dimension in the metric space.
One can easily verify that $D$ indeed defines a metric.

The following Lemma shows that for offline algorithms, 
solving the Minimum-Cost Perfect Matching \textbf{with Delays} 
problem in the metric space $\mathcal{M}$ is equivalent to solving the 
Minimum-Cost Perfect Matching problem in $\mathcal{M}_T$.
\begin{lemma}\label{lem_opts}
    Assume $\mathcal{I}=\langle r_i\rangle_{i=1}^m$ is an instance of MPMD then
    $\opt$ can be computed as the weight 
    of an optimal solution for the \offlinemetric{} problem on the instance $\mathcal{I}$ 
    as points in the time-augmented metric space $\mathcal{M}_T$.
\end{lemma}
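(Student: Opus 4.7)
The plan is to show equality by two inequalities. The key observation is that for any two requests $p,q$ with $t(p)\le t(q)$, if they are matched at time $t\ge t(q)$, then the delay cost satisfies
\[
|t - t(p)| + |t - t(q)| = (t - t(p)) + (t - t(q)) = 2t - t(p) - t(q) \ge t(q) - t(p) = |t(p)-t(q)|,
\]
with equality exactly when $t = t(q)$. Note that $d(x(p),x(q)) + |t(p)-t(q)|$ is precisely $D\bigl((x(p),t(p)),(x(q),t(q))\bigr)$, the distance in $\mathcal{M}_T$ between the two requests viewed as points.

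First I would prove $\opt \ge W^*$, where $W^*$ denotes the weight of an optimal perfect matching on $\mathcal{I}$ in $\mathcal{M}_T$. Given any feasible offline MPMD solution $\langle p_i,q_i,t_i\rangle$, applying the inequality above term by term yields
\[
\sum_{i=1}^{m/2}\Bigl(d(x(p_i),x(q_i)) + |t_i - t(p_i)| + |t_i - t(q_i)|\Bigr) \ge \sum_{i=1}^{m/2} D\bigl((x(p_i),t(p_i)),(x(q_i),t(q_i))\bigr),
\]
and the right-hand side is the weight of a perfect matching in $\mathcal{M}_T$, hence at least $W^*$.

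For the reverse inequality $\opt \le W^*$, I would take an optimal perfect matching of the points $\{(x(r_i),t(r_i))\}_{i=1}^m$ in $\mathcal{M}_T$, and convert it into a feasible MPMD schedule by matching each pair $(p_i,q_i)$ at time $t_i \eqdef \max(t(p_i),t(q_i))$. This is a legal matching time, and by the equality case of the computation above, the MPMD cost of this schedule equals exactly the weight of the chosen $\mathcal{M}_T$-matching, which is $W^*$. Therefore $\opt \le W^*$, and combined with the previous bound we conclude $\opt = W^*$.

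There is no real obstacle here; the only thing to be careful about is the feasibility constraint $t_i \ge \max(t(p_i),t(q_i))$ in the MPMD model, which is precisely what forces the delay cost to be at least $|t(p_i)-t(q_i)|$ and makes the correspondence exact. No triangle-inequality argument on $d$ is needed beyond the definition of $D$ itself.
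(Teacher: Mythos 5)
Your proof is correct and follows essentially the same approach as the paper: both directions are established by the same two constructions, namely (i) observing that any feasible MPMD schedule pays delay cost at least $|t(p)-t(q)|$ per matched pair so its cost dominates the weight of the induced $\mathcal{M}_T$-matching, and (ii) realizing an optimal $\mathcal{M}_T$-matching as a feasible MPMD schedule by matching each pair at $\max(t(p),t(q))$, which attains equality. The argument is complete and matches the paper's reasoning.
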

\begin{proof}
    Let $\str{\opt}$ be an optimal solution for \offlinemetric{} over the instance $\mathcal{I}$. We show that $\opt = \str{\opt}$.

    Let $A$ be the solution for \offlinemetric{} over the instance $\mathcal{I}$,
    which matches the pairs corresponding to those matched by $\opt$.
    The cost of $A$ is at most the cost of $\opt$, 
    since for a given pair $(u,v)$ matched by $\opt$ at time $t_{uv} \geq \max\left(t(u),t(v)\right)$,
    $\opt$ would pay $t_{uv} - t(u) + t_{uv} - t(v) + d(x(u),x(v))$, 
    while $A$ would pay $D(u,v) = |t(u)-t(v)| + d(x(u),x(v))$ which cannot be larger.
    Therefore $\str{\opt} \leq A \leq \opt$.
    
    For the other direction we define an online algorithm $B$
    which matches the pairs corresponding to those matched by $\str{\opt}$, 
    as soon as the two end-points arrive.
    For a given pair of requests $(p,q)$ matched by $B$, 
    it pays
    \begin{equation*}
        \max(t(p),t(q)) - t(p) + \max(t(p),t(q)) - t(q) + d(x(p),x(q)) = |t(p)-t(q)| + d(x(p),x(q))
    \end{equation*}
    Therefore the cost paid by $B$ is the same as the cost paid by $\str{\opt}$.
    
    Hence $\opt \leq B = \str{\opt}$.
\end{proof}

\section{A Deterministic Algorithm for MPMD on General Metrics}\label{sec_det_alg}
Our algorithm (ALG($\epsilon$)) is parametrized with a constant $\epsilon \in \doubleR$.
Upon the arrival of a request $p \in S \times \doubleR$, the algorithm begins to grow 
a hemisphere surrounding $p$ in the negative direction of the time axis, 
such that the radius growth rate is $\epsilon$.
Therefore, at time $t$, 
a request $q \in S \times \doubleR$ is on the hemisphere's boundary 
if and only if $\epsilon\left(t-t\left(p\right)\right)=D(p,q)$ and $t(q)\leq t(p)$, 
where $D$ is the distance function defined by the time-augmented metric space $\mathcal{M}_T$.
The algorithm matches a request $q$ to a request $p$ as soon as $q$ is found on the boundary of $p$'s hemisphere.

Note that the algorithm does not need to know the metric space in advance,
but it only requires that together with any arriving request $p$,
it learns the distances from $p$ to all previous requests.

\begin{algorithm}
    \caption{A Deterministic Algorithm for MPMD on General Metrics\label{alg:mono}}
    \begin{algorithmic}[1]
        \Procedure{ALG}{$\epsilon$}
        \State \textbf{At every moment $t$:}
        \State Add the new requests that arrive at time $t$
        \ForEach{unmatched request $p$}
            \ForEach{unmatched request $q \neq p$}
                \If{$t(p) \geq t(q)$ \textbf{and} $t = t(p) + \frac{D\left(p,q\right)}{\epsilon}$}
                \State \textbf{match}$\left(p,q\right)$
                \EndIf
            \EndFor
        \EndFor
     \EndProcedure
    \end{algorithmic}
\end{algorithm}

The algorithm is described as a continuous process 
but can be easily discretized using priority queues
over anticipated matching events for each pair.

The algorithm breaks ties arbitrarily (i.e. a request that is on multiple hemispheres at the same time, or multiple requests that are on the same hemisphere).
Note that for the analysis of the algorithm we may assume that there are no ties, 
as an adversary might slightly perturb the points so that the algorithm would choose the worse option.

\subsection{Analysis}\label{sec_analysis}
\begin{theorem}\label{thm_main}
    $\alg(\epsilon)$ is $O\left(\frac{1}{\epsilon}m^{\log\left(\frac{3+\epsilon}{2}\right)}\right)$-competitive.
\end{theorem}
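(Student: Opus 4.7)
By Lemma~\ref{lem_opts}, $\opt$ equals the cost of an offline optimum perfect matching $M^*$ on the requests viewed as points in the time-augmented metric space $\mathcal{M}_T$. I would begin by computing ALG's cost edge-by-edge: when ALG matches a pair $(p,q)$ with $t(p)\ge t(q)$ at time $t_{pq}=t(p)+D(p,q)/\epsilon$, the contribution of this pair to ALG's cost is
\[
 d(x(p),x(q)) + (t_{pq}-t(p)) + (t_{pq}-t(q)) \;=\; D(p,q) + \frac{2\,D(p,q)}{\epsilon} \;=\; \left(1+\frac{2}{\epsilon}\right) D(p,q).
\]
Writing $M_A$ for the matching ALG produces (viewed in $\mathcal{M}_T$), this yields $\alg(\epsilon) = (1+2/\epsilon)\sum_{e\in M_A} D(e)$. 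Hence it suffices to prove $\sum_{e\in M_A}D(e) = O(m^{\log((3+\epsilon)/2)})\cdot \sum_{e\in M^*}D(e)$.

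Next I would extract an ``almost-greedy'' property of ALG. Suppose $(p,q)$ is committed at time $t_{pq}$, with $t(p)\ge t(q)$, and let $r$ be any other request still unmatched at $t_{pq}$. A short case analysis on the signs of $t(r)-t(p)$ and $t(r)-t(q)$, using that the anticipated collision times of $(p,r)$ and $(q,r)$ both must be at least $t_{pq}$ and that $|t(a)-t(b)|\le D(a,b)$ in $\mathcal{M}_T$, yields
\[
 D(p,r) \;\ge\; \frac{D(p,q)}{1+\epsilon}, \qquad D(q,r) \;\ge\; \frac{D(p,q)}{1+\epsilon}.
\]
Intuitively: at the instant ALG commits an edge $e$, every still-unmatched request sits at $\mathcal{M}_T$-distance at least $D(e)/(1+\epsilon)$ from both endpoints of $e$.

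I would then decompose the symmetric difference $M_A\triangle M^*$ into a disjoint union of even alternating cycles and prove the bound cycle-by-cycle via a Reingold--Tarjan--style inductive cycle-shrinking argument~\cite{ReingoldT81}. For a cycle $C$ of length $2k$, let $(p,q)$ be the ALG edge of $C$ committed earliest by ALG; at that moment every other vertex of $C$ is still unmatched, so the almost-greedy property forces each of the two $M^*$-edges of $C$ incident to $\{p,q\}$ to have $\mathcal{M}_T$-length at least $D(p,q)/(1+\epsilon)$. Removing $p,q$ and splicing those two $M^*$-edges via the triangle inequality produces a smaller alternating cycle of length $2(k-1)$, to which the induction hypothesis applies. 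The resulting recurrence has base $(3+\epsilon)/2$ instead of the classical $3/2$, giving a per-cycle bound of $O(k^{\log((3+\epsilon)/2)})$ on the ratio. Summing over cycles and multiplying by the edge-cost factor $1+2/\epsilon$ produces the competitive ratio stated in the theorem.

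The main obstacle is executing the cycle recursion with the $\epsilon$-slack carefully. In the classical Reingold--Tarjan argument the sharp inequality $D(r,\{p,q\})\ge D(p,q)$ from true greediness drives the recurrence and produces base $3/2$; here we have only $D(r,\{p,q\})\ge D(p,q)/(1+\epsilon)$, and this loss has to be tracked through every level of the induction to land on the stated base $(3+\epsilon)/2$. Equally delicate is verifying that the ``earliest ALG edge in the cycle'' really enjoys the almost-greedy bound against every other vertex of the cycle, since those vertices have arbitrary time-stamps and the two hemisphere directions interact differently with earlier versus later requests.
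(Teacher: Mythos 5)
Your reduction to the time-augmented metric via Lemma~\ref{lem_opts}, the factor $1+2/\epsilon$ separating $\algon$ from $\algoff$, and the ``almost-greedy'' inequality $D(p,r),D(q,r)\ge D(p,q)/(1+\epsilon)$ for any $r$ still unmatched when $(p,q)$ commits, are all correct and line up with the paper's Lemmas~\ref{lem_opts}, \ref{lem_epsilon}, and (a mild generalization of) \ref{lem_main}. The cycle decomposition and the intent to induct on cycle size are also in the right spirit. The problem is the shape of the induction.

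Your recursion shrinks the cycle by \emph{one} ALG edge at a time: remove the earliest ALG edge $(p,q)$, splice $r_1,r_2$ together, and recurse on a cycle of length $2(k-1)$. Carrying this out gives, for $g(k)\eqdef\max A_k/O_k$,
\begin{equation*}
A_k = D(p,q) + A_{k-1},\qquad O_{k-1} = O_k - D(p,r_1) - D(q,r_2) + D(r_1,r_2) \le O_k + D(p,q),
\end{equation*}
and combined with $D(p,q) \le \frac{1+\epsilon}{2}\bigl(D(p,r_1)+D(q,r_2)\bigr) \le \frac{1+\epsilon}{2}\,O_k$ this yields a recurrence of the form $g(k) \le \frac{3+\epsilon}{2}\,g(k-1) + \frac{1+\epsilon}{2}$. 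That solves to $g(k)=\Theta\bigl((\tfrac{3+\epsilon}{2})^{k}\bigr)$ --- exponential in the cycle size, not $O\bigl(k^{\log((3+\epsilon)/2)}\bigr)$. (Note also that $O_{k-1}$ can \emph{grow} relative to $O_k$ under your splice, which is the wrong direction for a per-step charging argument.) So the ``Reingold--Tarjan base $(3+\epsilon)/2$'' does not fall out of a shrink-by-one induction; it comes from a \emph{divide-and-conquer} recursion. In the paper, one anchors on the \emph{last two} ALG edges $(a,b)$ and $(c,d)$ in the cycle: the very last edge $(c,d)$ is discarded by the cycle triangle inequality ($D(c,d)\le T$, giving $\algoff/\opt \le 2T/O-1$), and removing $(a,b)$ and $(c,d)$ splits the remainder of the cycle into two alternating paths $P_{ca}$ and $P_{db}$ of lengths $\alpha,\beta$, each closed into a sub-cycle and handled by induction. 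The constraints $1-\alpha-\beta \le (1+\epsilon)\alpha$ and $1-\alpha-\beta \le (1+\epsilon)\beta$ from Lemma~\ref{lem_main} bound the corners of the feasible polytope, producing the recurrence $f(m)=\min_i\{f(2i),\ \tfrac{1}{3+\epsilon}(f(2i)+f(m-2i))\}$; Lemma~\ref{lem_recurrence} then gives $f(m)=\Omega\bigl(m^{-\log((3+\epsilon)/2)}\bigr)$. The balanced split is what turns the constant $\tfrac{3+\epsilon}{2}$ into an exponent $\log(\tfrac{3+\epsilon}{2})$ on $m$ rather than on an exponential base.

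Two smaller points. First, choosing the \emph{earliest} ALG edge as the pivot is the wrong orientation for this style of argument: the triangle-inequality discard and the almost-greedy bound against both arc lengths $\alpha,\beta$ both want the \emph{latest} (longest) ALG edges, since only then are the remaining arc endpoints still unmatched and can the longest edge be absorbed into $T$. Second, your ``splice the two $M^*$-edges'' step implicitly treats the chord $(r_1,r_2)$ as an OPT edge of the sub-problem, which requires an extremality argument (the paper handles this by defining $f$ as a minimum over all instances and by Lemma~\ref{lem_local_opt_and_greedy}, which shows ALG and OPT restrict consistently to each cycle); this needs to be made explicit in any version of the proof.
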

Given $\epsilon\in\doubleR$ we run $\alg(\epsilon)$ 
over the instance $\mathcal{I}=\langle r_i\rangle_{i=1}^m$, 
that is with a hemisphere growth rate of $\epsilon$.
For the analysis, we denote $\algon$ to be the cost paid by $\alg(\epsilon)$, 
and $\algoff$ to be the weight of the matching produced by $\alg(\epsilon)$, 
when viewing $\mathcal{I}$ as points in the time-augmented metric space $\mathcal{M}_T$.
$\opt$ is the cost of an optimal solution for MPMD over the instance $\mathcal{I}$.

Consider the last two pairs of requests to be matched by $\alg$.
They consist of four requests, name them $a, b, c, d$, 
such that $(a, b)$ is one pair, and $(c,d)$ is the second pair.
Assume w.l.o.g that $(a,b)$ were matched at time $t_{ab}$,
and $(c,d)$ at $t_{cd} \geq t_{ab}$.
Also, assume w.l.o.g that $t(a) \leq t(b)$.
\begin{lemma}\label{lem_main}~ %
    \begin{enumerate}
        \item $D(a,b) \leq (1+\epsilon)D(a,c)$ and $D(a,b) \leq (1+\epsilon)D(a,d)$\label{lem_main:1}
        \item $D(a,b) \leq (1+\epsilon)D(b,c)$ and $D(a,b) \leq (1+\epsilon)D(b,d)$\label{lem_main:2}
    \end{enumerate}
\end{lemma}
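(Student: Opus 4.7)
The plan is to reduce both parts to the unified estimate
\[ D(y,z)\ge \frac{D(a,b)}{1+\epsilon} \qquad\text{for all } y\in\{a,b\},\ z\in\{c,d\}, \]
which immediately yields all four inequalities. The starting point is the matching-rule identity $D(a,b)=\epsilon(t_{ab}-t(b))$ applied to $(a,b)$ with $t(a)\le t(b)$, together with the fact that both $c$ and $d$ are still unmatched at time $t_{ab}$ (using tie-breaking, since $(c,d)$ is matched strictly later).

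The main tool is a simple non-match principle: whenever two requests $p,q$ with $t(p)\ge t(q)$ have both arrived by time $t$ and $p$ is still unmatched at $t$, then $D(p,q)\ge\epsilon(t-t(p))$, because $p$'s hemisphere has not yet reached $q$. Combined with the built-in time component $D(p,q)\ge |t(p)-t(q)|$ of the augmented metric, these are all the ingredients I need.

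Fixing $y\in\{a,b\}$ and $z\in\{c,d\}$, I split on where $t(z)$ lies. In the range $t(z)>t_{ab}$, the time component alone gives $D(y,z)>t_{ab}-t(b)=D(a,b)/\epsilon$, already stronger than the target. In the range $t(z)\le t(b)$, whichever of $y$ and $z$ arrived later still has an unmatched hemisphere at $t_{ab}$ (for $y=b$, the later is $b$ itself, which at $t_{ab}$ is matching $a$, not $z$), and the non-match principle gives $D(y,z)\ge \epsilon(t_{ab}-t(b))=D(a,b)$. The remaining range $t(b)<t(z)\le t_{ab}$ is where the subtlety lies: $z$ is the later of the two, its unmatched hemisphere yields $D(y,z)\ge \epsilon(t_{ab}-t(z))=D(a,b)-\epsilon\gamma$ with $\gamma\eqdef t(z)-t(b)\ge 0$, while the time component gives $D(y,z)\ge t(z)-t(y)\ge\gamma$. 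The minimum of $\max(D(a,b)-\epsilon\gamma,\gamma)$ over $\gamma\ge 0$ is attained at $\gamma=D(a,b)/(1+\epsilon)$, where both branches coincide at that value, yielding the claimed bound.

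The main obstacle is precisely this last range. There, neither the hemisphere bound nor the time-separation bound is by itself strong enough, and the factor $1+\epsilon$ in the statement of the lemma appears exactly as the cost of balancing the two estimates against each other. The other cases are one-line consequences of the matching rule, and the symmetry between $c$ and $d$, and between $a$ and $b$, means one analysis of $(y,z)$ covers all four asserted inequalities.
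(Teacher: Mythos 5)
Your argument is correct, and it rests on exactly the same two facts the paper uses: the matching-time identity $t_{ab}=t(b)+D(a,b)/\epsilon$ applied to all not-yet-matched pairs (your ``non-match principle'') and the built-in bound $D(p,q)\ge|t(p)-t(q)|$ from the time-augmented metric. Where you diverge from the paper is in the case decomposition and the packaging. The paper proves the four inequalities two at a time, splitting on $t(c)\ge t(a)$ versus $t(c)<t(a)$ for the bound on $D(a,c)$ and separately on $t(c)\ge t(b)$ versus $t(c)<t(b)$ for the bound on $D(b,c)$, and in each nontrivial case it substitutes $t(c)-t(a)\le D(a,c)$ (or $t(c)-t(b)\le D(b,c)$) directly into the non-match inequality to pull out the $(1+\epsilon)$ factor. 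You instead parametrize $y\in\{a,b\}$, $z\in\{c,d\}$ and split on the location of $t(z)$ relative to $t(b)$ and $t_{ab}$; the only range where the constant is actually tight is $t(b)<t(z)\le t_{ab}$, and there you phrase the combination as a min--max balance $\min_{\gamma\ge 0}\max\{D(a,b)-\epsilon\gamma,\ \gamma\}=D(a,b)/(1+\epsilon)$. The two derivations are algebraically equivalent, but your version handles all four inequalities with one analysis and makes it visually transparent where $1+\epsilon$ comes from, while the paper's two-case-per-inequality organization is more direct but repeats similar steps. One small caveat worth fixing: as stated, your non-match principle should require \emph{both} $p$ and $q$ to be unmatched (more precisely, unmatched throughout $[\,\max(t(p),t(q)),\,t\,]$), not just $p$, since otherwise $p$ could fail to match $q$ because $q$ was claimed by someone else; in your applications this is harmless because all of $a,b,c,d$ are unmatched up to $t_{ab}$, but the principle as written would be false in general.
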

\begin{proof}
    We only prove $D(a,b) \leq (1+\epsilon)D(a,c)$ and $D(a,b) \leq (1+\epsilon)D(b,c)$ 
    since there is no difference between $c$ and $d$.

    To prove (\ref{lem_main:1}), we look at two cases, that are $t(c) \geq t(a)$, and $t(c) < t(a)$.
    
    \textbf{Case $t(c) \geq t(a)$:}
    Upon the arrival of $c$ and $b$,
    the algorithm begins to grow hemispheres surrounding them,
    and in particular $a$ might be on their boundaries.
    Since $(a,b)$ was the first pair to be matched,
    $a$ was on $b$'s hemisphere before it was on $c$'s hemisphere
    (otherwise $(a,c)$ should have been matched first).
    Therefore $t(b) + \frac{D(a,b)}{\epsilon} \leq t(c) + \frac{D(a,c)}{\epsilon}$, and we conclude
    \begin{equation*}
        D(a,b)\leq D(a,c) + \epsilon(t(c) - t(b)) \leq D(a,c) + \epsilon(t(c) - t(a)) \leq (1+\epsilon)D(a,c)
    \end{equation*}
    
    \textbf{Case $t(c) < t(a)$:}
    Upon the arrival of $a$ and $b$,
    the algorithm begins to grow hemispheres surrounding them.
    In particular, $a$ might be on the boundary of $b$'s hemisphere, 
    and $c$ might be on the boundary of $a$'s hemisphere.
    Since $(a,b)$ was the first pair to be matched, 
    $a$ was on $b$'s hemisphere before $c$ was on $a$'s hemisphere 
    (otherwise $(a,c)$ should have been matched first).
    Therefore $t(b) + \frac{D(a,b)}{\epsilon} \leq t(a) + \frac{D(a,c)}{\epsilon}$.
    Thus, we conclude that
    \begin{equation*}
        D(a,b)\leq D(a,c) + \epsilon(t(a) - t(b)) = D(a,c) - \epsilon(t(b) - t(a)) \leq D(a,c) \leq (1+\epsilon)D(a,c)
    \end{equation*}
    
    To prove (\ref{lem_main:2}), we look at the two cases $t(c) \geq t(b)$, and $t(c) < t(b)$.

    \textbf{Case $t(c) \geq t(b)$:}
    Upon the arrival of $c$ and $b$,
    the algorithm begins to grow hemispheres surrounding them.
    In particular, $a$ might be on the boundary of $b$'s hemisphere, 
    and $b$ might be on the boundary of $c$'s hemisphere.
    Since $(a,b)$ was the first pair to be matched,
    $a$ was on $b$'s hemisphere before $b$ was on $c$'s hemisphere 
    (otherwise $(b,c)$ should have been matched first).
    Therefore $t(b) + \frac{D(a,b)}{\epsilon} \leq t(c) + \frac{D(b,c)}{\epsilon}$.
    Thus, we conclude that
    \begin{equation*}
        D(a,b)\leq D(b,c) + \epsilon(t(c) - t(b)) \leq D(b,c) + \epsilon D(b,c) = (1+\epsilon)D(b,c)
    \end{equation*}
    
    \textbf{Case $t(c) < t(b)$:}
    Upon $b$'s arrival,
    the algorithm begins to grow a hemisphere surrounding it,
    and in particular $a$ and $c$ might be on its boundary.
    Since $(a,b)$ was the first pair to be matched,
    $a$ was on $b$'s hemisphere before $c$ was (otherwise $(b,c)$ should have been matched first).
    Therefore $t(b) + \frac{D(a,b)}{\epsilon} \leq t(b) + \frac{D(b,c)}{\epsilon}$.
    Thus, we conclude that
    \begin{equation*}
        D(a,b)\leq D(b,c) \leq (1+\epsilon)D(b,c)
    \end{equation*}
\end{proof}
We use the following well known observation.
\begin{observation}\label{obs1}
    The union of any two matchings is a set of vertex-disjoint cycles. 
    In every such cycle, the edges alternate between the two matchings.
    Note that two parallel edges are considered a cycle.
\end{observation}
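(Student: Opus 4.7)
The plan is to view $M_1 \cup M_2$ as a multigraph $H$ on the common vertex set $V$, where an edge appearing in both matchings is included with multiplicity two (one copy carrying the label "$M_1$" and one the label "$M_2$"). The entire observation hinges on a single degree bound: every vertex $v \in V$ has degree at most two in $H$, because by definition of a matching $v$ is incident to at most one edge of $M_1$ and at most one edge of $M_2$.

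First I would invoke the elementary structural fact that any (multi)graph with maximum degree at most two is a disjoint union of isolated vertices, simple paths, and simple cycles; this follows from a trivial traversal starting at any vertex, which terminates either by returning to its start (yielding a cycle) or by hitting a vertex of degree one (yielding a path endpoint). In the setting relevant to the paper, $M_1$ and $M_2$ are both perfect matchings on $V$, so every vertex in $H$ has degree exactly two, ruling out isolated vertices and path components; hence $H$ decomposes into vertex-disjoint cycles. The degenerate case in which $M_1$ and $M_2$ both match the same pair $\{u,v\}$ produces a length-two cycle consisting of two parallel edges, which is exactly the "parallel edges" clause of the statement.

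Second, for the alternation property, I would argue locally at each vertex of a cycle. Along a cycle $v_0, v_1, \ldots, v_{k-1}, v_0$, consider an arbitrary vertex $v_i$ and its two incident cycle-edges $\{v_{i-1}, v_i\}$ and $\{v_i, v_{i+1}\}$. If both were labeled $M_j$ for the same $j \in \{1,2\}$, then $M_j$ would contain two edges sharing the vertex $v_i$, contradicting the definition of a matching. So consecutive cycle-edges always carry different labels, and with only two labels available this forces strict alternation around the cycle.

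The only mild subtlety worth flagging is the "parallel edges" case: using the multigraph convention (rather than set-theoretic union) ensures the two copies receive opposite labels automatically, so alternation is vacuous there. Beyond this, the argument is routine graph theory and uses nothing specific to the delays setting, which is why the authors cite the statement as well-known.
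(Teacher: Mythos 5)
Your proof is correct: the paper states this observation without proof (citing it as well known), and your degree-at-most-two multigraph argument, with alternation forced locally at each vertex, is exactly the standard argument one would supply. Your remark that the decomposition consists of cycles (rather than paths or isolated vertices) only because both matchings are perfect matchings on the same vertex set is a worthwhile clarification of the statement as used in the paper.
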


Let $\mathcal{C}=\{C_1,\ldots,C_k\}$ be the set of cycles (vertices and edges) 
generated from taking the union of the matchings produced by $\alg$ and $\opt$.
Define $l_1,\ldots,l_k \in \doubleR$ such that $l_i$ is the total length of edges of $\alg$ in $C_i$.
Define similarly $\str{l_1},\ldots,\str{l_k} \in \doubleR$ for edges of $\opt$. 

\begin{lemma}\label{lem_cycles}
    $\frac{\algoff}{\opt} \leq \max_{i}\frac{l_i}{\str{l_i}}$
\end{lemma}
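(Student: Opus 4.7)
The plan is to reduce the claim to the well-known mediant inequality: for positive reals $a_1,\dots,a_k$ and $b_1,\dots,b_k$ with $b_i>0$, one has $\frac{\sum_i a_i}{\sum_i b_i}\le \max_i \frac{a_i}{b_i}$. The content of the lemma is really just that the numerator and denominator of $\algoff/\opt$ split additively over the cycles of $\mathcal{C}$.

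First I would identify $\algoff$ and $\opt$ as sums over cycles. By Observation \ref{obs1}, the union of the matching produced by $\alg$ with an optimal offline matching of $\mathcal{I}$ (viewed in $\mathcal{M}_T$) decomposes into vertex-disjoint alternating cycles $C_1,\dots,C_k$, so every edge of either matching lies in exactly one $C_i$. Hence the total $\mathcal{M}_T$-weight of $\alg$'s matching is $\algoff = \sum_{i=1}^k l_i$, and the weight of the optimal offline matching in $\mathcal{M}_T$ is $\sum_{i=1}^k l_i^{*}$. By Lemma \ref{lem_opts}, this last sum equals $\opt$.

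Now I would apply the mediant inequality. Setting $M = \max_i l_i/l_i^{*}$ we get $l_i \le M\, l_i^{*}$ for every $i$; summing and dividing by $\opt=\sum_i l_i^{*}$ yields $\algoff/\opt \le M$, which is exactly the stated bound.

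The only point that needs checking is that the ratios $l_i/l_i^{*}$ are well-defined, i.e.\ that $l_i^{*}>0$ for each $i$. This is immediate because each edge of the optimal matching in $\mathcal{M}_T$ joins two distinct requests, and distinct requests in the time-augmented metric have strictly positive $D$-distance (they differ either in location or in arrival time, and we may assume no ties after an arbitrarily small adversarial perturbation, as already noted for $\alg$'s analysis). So there is no real obstacle here; the entire proof is a bookkeeping step combined with a one-line inequality.
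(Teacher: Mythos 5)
Your proposal is correct and is essentially the paper's own argument: the paper likewise writes $\frac{\algoff}{\opt}=\frac{\sum_i l_i}{\sum_i \str{l_i}}$ and bounds this weighted average of the ratios $l_i/\str{l_i}$ by their maximum, which is exactly the mediant inequality you invoke. Your extra remarks (using Lemma~\ref{lem_opts} to identify $\opt$ with the offline optimum in $\mathcal{M}_T$, and noting $\str{l_i}>0$) are fine and only make explicit what the paper leaves implicit.
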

\begin{proof}
    \begin{equation*}
        \frac{\algoff}{\opt} = 
        \frac{\sum_{i=1}^k{l_i}}{\sum_{i=1}^k{\str{l_i}}} = 
        \sum_{j=1}^k{\frac{\str{l_j}}{\sum_{i=1}^k{\str{l_i}}}\frac{l_j}{\str{l_j}}} \leq 
        \sum_{j=1}^k{\frac{\str{l_j}}{\sum_{i=1}^k{\str{l_i}}}\max_r{\frac{l_r}{\str{l_r}}}} = 
        \max_r{\frac{l_r}{\str{l_r}}} \sum_{j=1}^k{\frac{\str{l_j}}{\sum_{i=1}^k{\str{l_i}}}} = 
        \max_r{\frac{l_r}{\str{l_r}}}
    \end{equation*}
\end{proof}
\begin{lemma}\label{lem_local_opt_and_greedy}
    Denote $\hat{\str{l_i}}$ the cost paid by an optimal algorithm for \offlinemetric{} 
    on the instance constructed from the vertices of $C_i$, 
    and $\hat{l_i}$ the cost of running $\alg$ over the vertices of $C_i$.
    Then $\hat{\str{l_i}} = \str{l_i}$ and $\hat{l_i} = l_i$.
\end{lemma}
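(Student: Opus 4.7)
The plan is to prove the two equalities separately, handling the OPT side and the ALG side with different arguments.

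For $\hat{\str{l_i}} = \str{l_i}$, first observe that since $\alg$ and $\opt$ are both perfect matchings on the vertex set $V$, their union is a disjoint union of alternating cycles that partitions $V$, so the vertex sets $V_i$ of the cycles $C_i$ are pairwise disjoint and their union is $V$. The $\opt$-edges of $C_i$ therefore form a perfect matching of $V_i$ whose total $D$-weight is $\str{l_i}$, giving immediately $\hat{\str{l_i}} \leq \str{l_i}$. For the reverse inequality, suppose toward contradiction that some perfect matching $M$ of $V_i$ has $D$-weight strictly less than $\str{l_i}$. Then replacing the $\opt$-edges in $C_i$ by $M$ and keeping all other $\opt$-edges yields a perfect matching of $V$ in $\mathcal{M}_T$ of strictly smaller total weight than $\opt$, which by Lemma~\ref{lem_opts} contradicts the optimality of $\opt$ for the original MPMD instance.

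For $\hat{l_i} = l_i$, the $\alg$-edges of $C_i$ also form a perfect matching of $V_i$ with $D$-weight $l_i$, so it suffices to show that running $\alg$ on the sub-instance consisting of just the requests in $V_i$ produces exactly the same pairs as $\alg$'s matches within $C_i$ in the full run. The useful reformulation is that $\alg$ acts as a greedy algorithm in the ordering induced by $\tau(u,v) \vcentcolon= \max(t(u),t(v)) + D(u,v)/\epsilon$: pairs are scanned in increasing $\tau$, and a pair is committed whenever both endpoints are still unmatched. The key claim is that if $V' \subseteq V$ is \emph{closed} under $\alg$'s matching (every vertex of $V'$ is matched by $\alg$-on-$V$ to another vertex of $V'$), then $\alg$-on-$V'$ outputs the same matching as the restriction of $\alg$-on-$V$ to $V'$. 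I would establish this by induction on $\tau$-order over pairs inside $V'$: at the earliest pair $(u,v)$ on which the two runs disagree, if $(u,v)$ is matched only on the full instance then one of $u,v$ is matched in the sub-instance to some $w\in V'$ with $\tau(u,w)<\tau(u,v)$, and the inductive hypothesis forces the same match in the full instance, contradicting $(u,v)$ being committed there; if $(u,v)$ is matched only in the sub-instance, then at time $\tau(u,v)$ in the full run either both endpoints are free (so $\alg$-on-$V$ would also commit them) or one of them is matched to a vertex $w$, which by closedness lies in $V'$ and has $\tau<\tau(u,v)$, and induction again forces the same match in the sub-instance, contradicting $(u,v)$ being committed there. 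Applying the claim to $V'=V_i$, which is closed by construction of $C_i$, yields $\hat{l_i}=l_i$.

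The main obstacle is the inductive step in the second part: one must verify that neither run can ``escape'' $V_i$ at an earlier time in a way the other cannot see. This relies crucially on closedness of $V_i$ (which rules out matches to vertices outside $V_i$ in the full run) and on tie-breaking being consistent across the two runs, which is already justified by the paper's remark that the adversary may perturb positions to remove ties. The first part is comparatively routine once the cycle decomposition is in place.
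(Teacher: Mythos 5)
Your proposal is correct and follows essentially the same route as the paper: the first equality is proved by the same restriction/exchange argument against the global optimum, and the second by comparing the two runs at the first pair on which they disagree and showing the pre-disagreement states coincide (your $\tau$-ordering is exactly the anticipated matching time the paper sorts by, and your ``closedness'' of $V_i$ is the fact, implicit in the paper, that $\alg$'s edges in $C_i$ form a perfect matching of $C_i$'s vertices). Your write-up is somewhat more explicit about the tie-breaking and the role of closedness, but the underlying argument is the same.
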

\begin{proof}
    To prove $\hat{\str{l_i}} = \str{l_i}$ assume by contradiction that $\str{l_i} < \hat{\str{l_i}}$. 
    Notice that the subset of edges of $\opt$ contained in $C_i$ is a legal solution for \offlinemetric{} with cost $\str{l_i}$.
    Clearly $\str{l_i}$ is less than $\hat{\str{l_i}}$, contradicting the definition of $\hat{\str{l_i}}$.
    For the other direction, 
    let $E$ be the edges matched by $\opt$,
    and $\hat{E}$ be the edges matched by an optimal algorithm for \offlinemetric{} on the instance constructed from the vertices of $C_i$.
    Define $\bar{E} = (E \setminus C_i) \cup \hat{E}$.
    Notice that $\bar{E}$ is a legal solution for \offlinemetric{} on the instance $\mathcal{I}$
    with cost $\sum_{i=1}^k{\str{l_i}} - \str{l_i} + \hat{\str{l_i}} < \opt$ contradicting the definition of $\opt$.
    Therefore $\str{l_i} = \hat{\str{l_i}}$.

    To prove $\hat{l_i} = l_i$ we show that $K$ - 
    the matching produced by $\alg$ when running over the vertices of $C_i$, 
    is the same as $E_i$ - the subset of edges matched by $\alg$ and contained in $C_i$, 
    when running on the instance $\mathcal{I}$.
    Let $r = \frac{|C_i|}{2}$ where $|C_i|$ is the number of edges in $C_i$, 
    and note that $|E_i| = r = |K|$, since both $E_i$ and $K$ are matchings over $C_i$.
    Sort the edges of $E_i$ by the time they are formed from first to last: 
    $e_1 = (u_1,v_1),\ldots, e_r = (u_r, v_r)$, 
    and the same for the edges of $K$: $k_1=(p_1,q_1),\ldots,k_r=(p_r,q_r)$.
    
    Assume by contradiction that $E_i \neq K$, and let $j$ be the lowest index with $e_j\neq k_j$.
    Let $t_e$ be the time that $e_j$ was formed and $t_k$ be the time that $k_j$ was formed.
    At $\min(t_e,t_k)$, just before $e_j$ and $k_j$ were formed, $E_i$ and $K$ contained the same set of edges.
    Therefore the points that were not matched by $\alg$ until $\min(t_e,t_k)$, are the same in the two cases,
    and obviously the radii of the hemispheres at $\min(t_e,t_k)$ are the same in both cases as well.
    Thus, if $v_j$ and $u_j$ still exist in $\alg$'s run on $\mathcal{I}$ at that time, 
    and $v_j$ is on $u_j$'s hemisphere,
    then at the same time both $v_j$ and $u_j$ exist in $\alg$'s run on $C_i$, 
    and $v_j$ is on $u_j$'s hemisphere.
    Thus $\alg$ would match the pair $(u_j, v_j)$ when running on $C_i$ at $t_e = t_k$, 
    concluding $e_j = k_j$ and contradicting the assumption.
\end{proof}
\begin{corollary}\label{cor_cycles}
    By virtue of Lemma~\ref{lem_cycles} and Lemma~\ref{lem_local_opt_and_greedy} 
    it suffices to consider $\frac{\algoff}{\opt}$ when the union of the matchings produced by $\alg$ and $\opt$ forms a single cycle.
\end{corollary}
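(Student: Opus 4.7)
The plan is to string together the two preceding lemmas and observe that each cycle in the union already gives us a concrete single-cycle instance whose ratio equals $l_i/\str{l_i}$.

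First I would invoke Lemma~\ref{lem_cycles} to get
\begin{equation*}
    \frac{\algoff}{\opt} \;\leq\; \max_{i} \frac{l_i}{\str{l_i}},
\end{equation*}
reducing the task to bounding the worst per-cycle ratio. Next, for a fixed cycle $C_i$, I would form the restricted instance $\mathcal{I}_i$ consisting only of the requests appearing as vertices of $C_i$ (with the same locations and arrival times as in $\mathcal{I}$). Lemma~\ref{lem_local_opt_and_greedy} tells us two things about this instance: the cost $\hat{l_i}$ of running $\alg$ on $\mathcal{I}_i$ equals $l_i$, and the cost $\hat{\str{l_i}}$ of an optimal offline solution on $\mathcal{I}_i$ (in the time-augmented metric) equals $\str{l_i}$. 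Hence $l_i/\str{l_i} = \hat{l_i}/\hat{\str{l_i}}$ is itself a ratio of the form $\algoff(\mathcal{I}_i)/\opt(\mathcal{I}_i)$ for a genuine, self-contained instance.

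The key point, and the only thing that really needs checking, is that $\mathcal{I}_i$ is an instance of the kind described in the statement, namely one whose $\alg$-matching and $\opt$-matching union to a single cycle. For this I would use the proof of Lemma~\ref{lem_local_opt_and_greedy}: the matching $\alg$ produces on $\mathcal{I}_i$ is exactly $E_i$, the restriction to $C_i$ of the $\alg$-matching on $\mathcal{I}$, and the restriction to $C_i$ of the original $\opt$ is a valid offline matching on $\mathcal{I}_i$ attaining cost $\str{l_i} = \hat{\str{l_i}}$, so it is optimal for $\mathcal{I}_i$. By Observation~\ref{obs1} applied to $\mathcal{I}$, the union of these two matchings inside $C_i$ is precisely the edge set of $C_i$, which is a single cycle.

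Combining everything, if we can prove that $\algoff/\opt \leq f$ for every instance whose $\alg$- and $\opt$-matchings union to a single cycle, then in particular $\hat{l_i}/\hat{\str{l_i}} \leq f$ for each $i$, and so $\algoff/\opt \leq \max_i l_i/\str{l_i} \leq f$ on the original instance as well. I do not anticipate any serious obstacle here; the only subtlety is being careful that the optimum on $\mathcal{I}_i$ really is attained by the restricted $\opt$-matching (which is exactly what Lemma~\ref{lem_local_opt_and_greedy} was designed to give), so that the ``single cycle'' structure of $C_i$ survives the restriction.
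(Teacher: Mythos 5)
Your proposal is correct and matches the paper's intent exactly: the paper offers no separate proof of this corollary, treating it as an immediate combination of Lemma~\ref{lem_cycles} (reduce to the worst per-cycle ratio) and Lemma~\ref{lem_local_opt_and_greedy} (each per-cycle ratio is realized by the genuine restricted instance on $C_i$, whose $\alg$-matching and restricted optimal matching union to the single cycle $C_i$). Your spelled-out version, including the remark that the restricted $\opt$-matching is itself optimal for $\mathcal{I}_i$, is precisely the reasoning the authors rely on.
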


\begin{lemma}\label{lem_recurrence}
    Let $\gamma\in\doubleR$ s.t. $\gamma > 2$ and let $f:\doubleN \rightarrow \doubleR$ satisfy the recurrence relation
\begin{equation*}
    f(2k) = \min_{1\:\leq\:i\:\leq\:k - 1}\left\{f\left(2i\right),\:\frac{1}{\gamma}\left(f\left(2i\right) + f\left(2k-2i\right)\right)\right\},\;f(2) = 1
\end{equation*}
    
    Then, 
    \begin{equation*}
        f(n)=\Omega\left(
            \frac{1}{
                n^{\log\left(\frac{\gamma}{2}\right)}}
            \right)
    \end{equation*}
\end{lemma}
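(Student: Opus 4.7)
The plan is to prove the stronger pointwise inequality $f(n) \geq n^{-\alpha}$ for every even $n \geq 2$, where $\alpha \eqdef \log(\gamma/2)$; since $\gamma > 2$ we have $\alpha > 0$ and $2^{\alpha + 1} = \gamma$. This immediately implies the claimed $\Omega$-bound with an explicit constant.

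I would proceed by strong induction on $n = 2k$. The base case $n = 2$ is immediate since $f(2) = 1 \geq 2^{-\alpha}$. For the inductive step, fix $k \geq 2$ and assume $f(2j) \geq (2j)^{-\alpha}$ for every $1 \leq j < k$. Since $f(2k)$ is the minimum over $i \in \{1,\ldots,k-1\}$ of the two quantities $f(2i)$ and $\tfrac{1}{\gamma}(f(2i) + f(2k-2i))$, it suffices to lower bound each such quantity by $(2k)^{-\alpha}$. The first bound is immediate: the induction hypothesis gives $f(2i) \geq (2i)^{-\alpha} \geq (2k)^{-\alpha}$, using $i < k$ and $\alpha > 0$. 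For the second, applying the induction hypothesis and factoring out $(2k)^{-\alpha}$ via $x \eqdef i/k$ reduces the goal to
\begin{equation*}
    \frac{1}{\gamma}\bigl(x^{-\alpha} + (1-x)^{-\alpha}\bigr) \geq 1 \qquad \text{for every } x \in (0,1).
\end{equation*}

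This inequality is where the choice of $\alpha$ pays off, and it is essentially the only substantive step. The function $g(x) \eqdef x^{-\alpha} + (1-x)^{-\alpha}$ is strictly convex on $(0,1)$ (its second derivative is visibly positive) and symmetric about $x = 1/2$, so it attains its unique minimum there, with $g(1/2) = 2 \cdot 2^{\alpha} = 2^{\alpha + 1} = \gamma$. Hence $g(x) \geq \gamma$, which completes the induction. The main conceptual obstacle is guessing the exponent $\alpha$ in the first place: one observes that among all choices of $i$ the balanced split $i \approx k/2$ is the worst case, and forcing equality there in the ansatz $f(n) = c\,n^{-\alpha}$ yields $2 \cdot 2^{\alpha} = \gamma$, i.e.\ $\alpha = \log(\gamma/2)$. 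Once the exponent is pinned down, every inequality in the induction is tight exactly at the balanced split and strict elsewhere, so no slack is lost.
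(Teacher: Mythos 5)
Your proposal is correct and takes essentially the same route as the paper's proof: strong induction with the power-law ansatz $n^{-\alpha}$, $\alpha=\log\left(\frac{\gamma}{2}\right)$, where the only substantive step is lower-bounding $\frac{1}{\gamma}\left(x^{-\alpha}+(1-x)^{-\alpha}\right)$ by its value at the balanced split $x=\frac{1}{2}$ using symmetry, which equals $1$ precisely because $2^{\alpha+1}=\gamma$. One small point in your favor: you correctly invoke \emph{convexity} of $x^{-\alpha}+(1-x)^{-\alpha}$ to place the minimum at the midpoint, whereas the paper's write-up calls this function concave (a slip in wording, since a concave function would be minimized at the endpoints), so your justification of that step is the cleaner one.
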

\begin{proof}
    We prove by induction on $k$ that $f(2k) \geq \left(\frac{2}{\gamma}\right)^{\log{k}}$.
    
    \

    \textbf{Base Case ($k = 1$):} $f(2)=1$, and $\left(\frac{2}{\gamma}\right)^{\log{1}} = \left(\frac{2}{\gamma}\right)^0 = 1$.
    
    \

    \textbf{Inductive step:} Assume the claim holds for all $j<k$.
    
    By the induction hypothesis for every $j < k$ it holds that
    $f(2j) \geq \left(\frac{2}{\gamma}\right)^{\log{j}} > \left(\frac{2}{\gamma}\right)^{\log{k}}$.
    Therefore, from the definition of $f$
    \begin{equation*}
        f(2k) \geq 
        \min\left(\left(\frac{2}{\gamma}\right)^{\log{k}},\:
        \frac{1}{\gamma}\left(f(2) + f(2k-2)\right),\:
        \frac{1}{\gamma}\left(f(4) + f(2k-4)\right),\:
        \ldots\right)
    \end{equation*}
    Define $h(j) = \frac{1}{\gamma}\left(f(2j) + f(2k-2j)\right)$, so 
    \begin{equation*}
        f(2k) \geq 
    \min\left(\left(\frac{2}{\gamma}\right)^{\log{k}},\:
    \min\limits_{1\:\leq\:j\:\leq\:k - 1}\left\{h(j)\right\}
    \right)
    \end{equation*}
    By the induction hypothesis,
    \begin{equation*}
        h(j) \geq \frac{1}{\gamma}\left(
        \left(\frac{2}{\gamma}\right)^{\log{j}} + \left(\frac{2}{\gamma}\right)^{\log{k-j}}
        \right)
        \geq \min\limits_{x\:\in\:\doubleR} \frac{1}{\gamma}\left\{
        \left(\frac{2}{\gamma}\right)^{\log{x}} + \left(\frac{2}{\gamma}\right)^{\log{k-x}}
        \right\}
    \end{equation*}
    
    $\left(\frac{2}{\gamma}\right)^{\log{x}} + \left(\frac{2}{\gamma}\right)^{\log{k-x}}$ is symmetric about $x = \frac{k}{2}$.
    Moreover, it is a concave function as it is the sum of two concave functions, thus the minimum point occurs at $x=\frac{k}{2}$.

    We found that $h(j) \geq \frac{1}{\gamma}\left(
        \left(\frac{2}{\gamma}\right)^{\log{\frac{k}{2}}} + \left(\frac{2}{\gamma}\right)^{\log{\frac{k}{2}}}
        \right) = \left(\frac{2}{\gamma}\right)^{\log{\frac{k}{2}}+1} = 
        \left(\frac{2}{\gamma}\right)^{\log{k}}
        $

    Hence, we conclude
    \begin{equation*}
        f(2k) \geq 
        \min\left(\left(\frac{2}{\gamma}\right)^{\log{k}},\:
        \left(\frac{2}{\gamma}\right)^{\log{k}}\right) = 
        \left(\frac{2}{\gamma}\right)^{\log{k}} = 
        \frac{1}{k^{\log{\frac{\gamma}{2}}}}    
    \end{equation*}
\end{proof}

\begin{lemma}\label{lem_algoff_to_opt}
    $\algoff \leq O\left(m^{\log\left(\frac{3+\epsilon}{2}\right)}\right) \opt$
\end{lemma}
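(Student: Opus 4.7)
My plan is to prove the bound by strong induction, using Corollary~\ref{cor_cycles} to reduce to a single cycle and Lemma~\ref{lem_recurrence} to obtain the final polynomial bound. Specifically, by Corollary~\ref{cor_cycles} it suffices to show that for any single cycle $C$ of size $2k$ in the union of $\alg$'s and $\opt$'s matchings, $\algoff(C) \leq (1/f(2k)) \opt(C)$, where $f$ is the function defined in Lemma~\ref{lem_recurrence} with $\gamma = 3 + \epsilon$. Combined with the asymptotic bound $f(n) = \Omega(n^{-\log((3+\epsilon)/2)})$ from Lemma~\ref{lem_recurrence} and the fact that $k \leq m/2$, this yields the claim. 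The base case $k = 1$ is immediate: a 2-cycle pairs one request to another matched identically by $\alg$ and $\opt$, so $\algoff(C) = \opt(C) = (1/f(2)) \opt(C)$.

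For the inductive step at $k \geq 2$, the key first observation is that the proof of Lemma~\ref{lem_main} in fact establishes a stronger statement: for \emph{any} two $\alg$-edges $e_1, e_2$ with $e_1$ matched strictly before $e_2$, one has $D(e_1) \leq (1+\epsilon) D(x, y)$ for any $x \in e_1$ and $y \in e_2$ (the argument only uses that $c, d$ remain unmatched until $t_{cd} \geq t_{ab}$). Combined with Lemma~\ref{lem_local_opt_and_greedy}, taking $(a, b)$ to be the $\alg$-edge in $C$ matched first yields $D(a, b) \leq (1+\epsilon) D(x, y)$ for every $x \in \{a, b\}$ and every $y \in V(C) \setminus \{a, b\}$, since each such $y$ is an endpoint of another $\alg$-edge of $C$ matched after $(a, b)$. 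Specializing $y$ to the $\opt$-partners $a', b'$ gives $D(a, b) \leq \tfrac{1+\epsilon}{2} \opt(C)$.

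I then split $C$ into sub-cycles $C_1, C_2$ of sizes $2i$ and $2(k-i)$ (with $i \approx k/2$) by removing $(a, b)$ and a second $\alg$-edge $(c, d)$ chosen to balance the split, and closing each resulting arc with a virtual $\alg$-edge $(b, c)$ and $(a, d)$. Since no $\opt$-edge is touched, $\opt(C) = \opt(C_1) + \opt(C_2)$, and a direct accounting gives $\algoff(C) = \algoff(C_1) + \algoff(C_2) + D(a, b) + D(c, d) - D(b, c) - D(a, d)$. Applying the strengthened Lemma~\ref{lem_main} to $(a, b)$ paired with $(c, d)$, both virtual lengths satisfy $D(b, c), D(a, d) \geq D(a, b)/(1 + \epsilon)$, so the $D(a, b)$-involving correction $D(a, b) - D(b, c) - D(a, d) \leq D(a, b)(\epsilon - 1)/(1+\epsilon)$ is non-positive for $\epsilon \leq 1$, giving $\algoff(C) \leq \algoff(C_1) + \algoff(C_2) + D(c, d)$. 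Applying the inductive hypothesis to $C_1$ and $C_2$ and bounding $D(c, d)$ via the triangle inequality $D(c, d) \leq D(c, a) + D(a, b) + D(b, d)$ and absorbing into the sub-cycle $\opt$-costs, the balanced choice of $(c, d)$ yields an inequality of the form $\algoff(C)/\opt(C) \leq \gamma/(f(2i) + f(2(k-i)))$ with $\gamma = 3 + \epsilon$, matching the second branch of the minimum in Lemma~\ref{lem_recurrence}'s recurrence and closing the induction.

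The main technical obstacle is twofold. First, the length $D(c, d)$ is not directly controlled by Lemma~\ref{lem_main} (which bounds only the earlier-matched edge), so it must be bounded via triangle inequalities through $(a, b)$'s endpoints and absorbed into the inductive bounds on the sub-cycles. Second, applying the inductive hypothesis to $C_1, C_2$ requires formulating the induction abstractly over cycles carrying a Lemma~\ref{lem_main}-like property for a designated $\alg$-edge, and verifying that this property is inherited by each sub-cycle (e.g.\ for the first-matched real $\alg$-edge of $C_\ell$, with the inequality relative to vertices $b$ or $c$ coming from the strengthened Lemma~\ref{lem_main} applied to $(a, b)$ itself). The balanced choice of $(c, d)$ is essential both for getting the right asymptotic behavior from the recurrence and for keeping the correction terms and inherited inequalities under control.
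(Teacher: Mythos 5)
Your proposal departs from the paper's argument in two substantive ways, and both lead to gaps.

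First, your inductive invariant $\algoff(C) \leq \frac{1}{f(2k)}\opt(C)$ is already false at $k=2$. For a $4$-cycle on $a,b,c,d$ with $\alg$-edges $(a,b),(c,d)$ and $\opt$-edges $(a,c),(b,d)$, Lemma~\ref{lem_main} gives only $2D(a,b)\leq(1+\epsilon)\opt(C)$, and the triangle inequality gives $D(c,d)\leq \opt(C)+D(a,b)$, so the best deducible bound is $\algoff(C)\leq(2+\epsilon)\opt(C)$. But $\frac{1}{f(4)}=\frac{3+\epsilon}{2}<2+\epsilon$, so the claimed invariant cannot hold even in the base of the real induction. The paper avoids this by never inducting on $\algoff/\opt$ directly: it sets up a recurrence for $T/O$ where $T$ excludes the \emph{last} $\alg$-edge $(c,d)$, absorbs $D(c,d)\leq T$ once at the top, and only then translates back via $\algoff/\opt\leq 2T/O-1$. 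The factor of $2$ and the $-1$ are exactly what make the bound reachable; your formulation drops them.

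Second, your choice of $(a,b)$ as the \emph{first}-matched $\alg$-edge is what breaks the inheritance you need. After the split, $C_1$ consists of the arc $P_{bc}$ plus the virtual edge $(b,c)$. For your designated first real $\alg$-edge $e'$ of $C_1$, you need $D(e')\leq(1+\epsilon)D(x,b)$ for $x\in e'$, but $b$ is an endpoint of $(a,b)$, which is matched \emph{before} $e'$. The greedy rule gives no constraint in that direction: when $e'$'s hemispheres reach $b$, $b$ is already matched, so nothing prevents $D(e')$ from exceeding $(1+\epsilon)D(x,b)$. Your parenthetical remark ("the inequality relative to vertices $b$ or $c$ coming from the strengthened Lemma~\ref{lem_main} applied to $(a,b)$") produces $D(a,b)\leq(1+\epsilon)D(\cdot,\cdot)$, which is the wrong-direction inequality and does not bound $D(e')$. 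The paper instead takes $(a,b)$ and $(c,d)$ to be the \emph{last two} $\alg$-edges, so that both virtual endpoints ($a$ and $c$, resp.\ $b$ and $d$) belong to edges matched after every real $\alg$-edge in the corresponding sub-cycle, and the Lemma~\ref{lem_main} bound is inherited in the correct direction throughout the recursion.

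Finally, your claim that triangle-inequality absorption "yields $\algoff(C)/\opt(C)\leq \gamma/(f(2i)+f(2(k-i)))$" is asserted but not derived; working through your steps yields a quantity like $\max\{1/f(2i),1/f(2(k-i))\}+1+\epsilon$, not the product-of-convex-combination form. The paper gets the $\frac{1}{3+\epsilon}(f(2i)+f(m-2i))$ branch from a genuine LP argument: with $\alpha,\beta$ as free variables constrained by $1-\alpha-\beta\leq(1+\epsilon)\alpha$ and $1-\alpha-\beta\leq(1+\epsilon)\beta$, the linear objective $\alpha f(2i)+\beta f(m-2i)$ is minimized at a vertex of the polyhedron, and the symmetric vertex $\alpha=\beta=\frac{1}{3+\epsilon}$ is where $\gamma=3+\epsilon$ comes from. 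This step is missing from your proposal and, as written, your route does not reproduce it.

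Your "strengthened Lemma~\ref{lem_main}" observation (that the argument applies to any $\alg$-edge matched strictly before another) is correct and is implicitly the form the paper needs when applying the recurrence to sub-cycles. But on its own it does not repair the two gaps above; the choice of the \emph{last} two $\alg$-edges and the reformulation in terms of $T$ are both essential to the paper's argument.
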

\begin{proof}
    We view the requests as if they were in the time-augmented metric space $\mathcal{M}_T$, 
    and analyze the performance of $\alg$ in an offline manner.
    By Corollary~\ref{cor_cycles} we analyze the performance of $\alg$ 
    when $G=(\mathcal{I}, E)$, 
    the union of the matchings produced by $\alg$ and $\opt$, forms a single cycle.
    
    Denote $E_O$ the subset of edges matched by $\opt$, 
    and $E_A$ the subset of edges matched by $\alg$.
    Consider again the last two pairs of requests to be matched by $\alg$, 
    that is $(a,b)$ and $(c,d)$, and assume that $t_{ab}\leq t_{cd}$ and $t(b) \geq t(a)$
    ($t_{ab}$ is the time that $\alg$ matched $(a,b)$, and $t_{cd}$ is the time that $\alg$ matched $(c,d)$).
    Denote $T =\sum_{e\in E\setminus\{(c,d)\}}D(e)$, and let $O = \sum_{e\in E_O}D(e)$.
    From the triangle inequality we have that $D(c,d)$ is smaller than $T$, therefore
    \begin{equation}\label{eqn_T_O_relation}
        \frac{\algoff}{\opt} = \frac{D(c,d) + T - O}{O} \leq \frac{2T - O}{O} = 2\frac{T}{O} - 1
    \end{equation}
    We will bound $\frac{O}{T}$ from below, by developing and solving a recurrence relation similar to the one developed in \cite{ReingoldT81}, thus giving an upper bound on $\frac{\algoff}{\opt}$.

    Scale the distances so that $T=1$. Of course, $\frac{O}{T}$ stays the same.
    Let $f(m)$ be the minimal value of $\frac{O}{T}$
    over all possible inputs of size $m$ ($|\mathcal{I}| = m$), 
    when the union of the matchings produced by $\alg$ and $\opt$ forms a single cycle.
    
    For the sake of this analysis consider Figure~\ref{figure_analysis}.
    \begin{figure}[ht]
        \centering
        \begin{tikzpicture}
    [request/.style={circle,fill=black,thick, inner sep=0pt,minimum size=4pt}]
    \tikzstyle{every node}=[font=\small]

    \node (a) [request]  at (0,0) [label=below:$a$]{};
    \node (b) [request] at (2,0) [label=below:$b$]{};
    \node (c) [request] at (-0.5,4) [label=above:$c$]{};
    \node (d) [request] at (2.5,4) [label=above:$d$]{};
    \draw[black, dash dot] (a) to[out=115,in=-115] node[midway,fill=white] {$P_{ca}$} (c);
    \draw[black, dash dot] (b) to[out=65,in=-65] node[midway,fill=white] {$P_{db}$} (d);
    \draw[black] (a) -- (b);
    \draw[black] (c) -- (d);
\end{tikzpicture}
        \caption[0]{The cycle formed by the union of the matchings produced by $\alg$ and $\opt$. 
        \hspace{\textwidth} The length of $P_{ca}$ is $\alpha$, and the length of $P_{db}$ is $\beta$.}
        \label{figure_analysis}
    \end{figure}
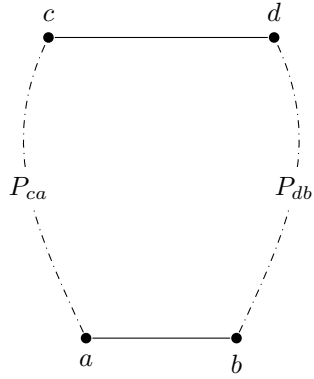

    Let $P_{ca}$ be the alternating path from $c$ to $a$, and $P_{db}$ 
    be the alternating path from $d$ to $b$.
    Denote $\alpha = \sum_{e\in P_{ca}}D(e)$, and $\beta = \sum_{e\in P_{db}}D(e)$.
    Then, by the triangle inequality 
    \begin{equation}\label{eqn_triangle}
        \alpha \geq D(a,c)
    \end{equation}
    From Lemma~\ref{lem_main} we have
    \begin{equation}\label{eqn_lem_main}
        (1+\epsilon)D(a,c) \geq D(a,b)
    \end{equation}
    It follows from Equations (\ref{eqn_triangle}) and (\ref{eqn_lem_main}) that
    \begin{equation}
        1-\alpha-\beta=D(a,b)\leq (1+\epsilon)\alpha
    \end{equation} 
    Similarly $1-\alpha-\beta \leq (1+\epsilon)\beta$.

    Let $2i$ be the number of points on $P_{ca}$, then
    $f(m)$ satisfies the recurrence relation
    \begin{equation}
        f(m) = \min_{\substack{
            1 \:\leq\: i \:<\: \frac{m}{2} - 1\\
            0 \:<\: 1 - \alpha - \beta \:\leq\: (1 + \epsilon)\alpha\\
            0 \:<\: 1 - \alpha - \beta \:\leq\: (1 + \epsilon)\beta}} 
            \{\alpha f(2i) + \beta f(m-2i)\}
    \end{equation}
    
    Conditioning on $t$, $f(t)$ and $f(m-t)$ are constant, 
    therefore $\alpha f(t) + \beta f(m-t)$ becomes a linear function in $\alpha$ and $\beta$, 
    so its minimum must occur at a vertex of the polyhedron defined by the minimization constraints
    (see for example \cite{Dantzig63}).
    
    The vertices of this polyhedron are $(1,0), (0,1), (\frac{1}{3 + \epsilon}, \frac{1}{3 + \epsilon})$, so
    \begin{equation}
        f(m) = \min_{1 \:\leq\: i \:\leq\: \frac{m}{2} - 1}\left\{f\left(2i\right),\:\frac{1}{3 + \epsilon}\left(f\left(2i\right) + f\left(m-2i\right)\right)\right\}
    \end{equation}
    Also note that $f(2) = 1$, since there is only one way to match two points, so $T = O$.
    The conditions of Lemma~\ref{lem_recurrence} are met with $\gamma = 3 + \epsilon$, 
    thus 
    \begin{equation*}
        f(m)=\Omega\left(\frac{1}{m^{\log\left(\frac{3+\epsilon}{2}\right)}}\right)
    \end{equation*}
    Finally, from \ref{eqn_T_O_relation} we conclude
    \begin{equation*}
        \frac{\algoff}{\opt} \leq 2\frac{T}{O} - 1 \leq \frac{2}{f(m)} = O\left(m^{\log\left(\frac{3+\epsilon}{2}\right)}\right)
    \end{equation*}
\end{proof}    
\begin{lemma}\label{lem_epsilon}
    $\algon = \Theta\left(\frac{1}{\epsilon}\right) \algoff$
\end{lemma}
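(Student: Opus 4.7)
The plan is to establish the claim pair by pair: for each pair $(p,q)$ matched by $\alg(\epsilon)$, I will show that its contribution to $\algon$ equals $(1 + 2/\epsilon)D(p,q)$, i.e.\ exactly $(1 + 2/\epsilon)$ times its contribution to $\algoff$. Summing over all matched pairs then yields $\algon = (1 + 2/\epsilon)\algoff = \Theta(1/\epsilon)\algoff$, which is the claim.

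Fix a pair $(p,q)$ matched by $\alg(\epsilon)$ and, without loss of generality, assume $t(p) \geq t(q)$. Looking at the matching condition in Algorithm~\ref{alg:mono}, this pair is matched precisely at time $t_{pq} = t(p) + D(p,q)/\epsilon$. Consequently, the contribution of this pair to $\algon$ is
\begin{equation*}
d(x(p),x(q)) + (t_{pq} - t(p)) + (t_{pq} - t(q)) = d(x(p),x(q)) + \frac{2D(p,q)}{\epsilon} + (t(p) - t(q)).
\end{equation*}
Since $D(p,q) = d(x(p),x(q)) + |t(p)-t(q)| = d(x(p),x(q)) + (t(p)-t(q))$, the right-hand side collapses to $(1 + 2/\epsilon)\,D(p,q)$, which is exactly what is needed.

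The lemma follows by summing this per-pair identity over the $m/2$ pairs matched by $\alg(\epsilon)$. There is no serious obstacle here; the only care needed is to identify the later-arriving endpoint of each pair (which is forced by the condition $t(p) \geq t(q)$ appearing in the algorithm), and to notice that the gap $|t(p)-t(q)|$ between arrival times shows up both in the delay term and inside $D(p,q)$ itself, so that the connection cost and the delay cost combine cleanly into a single multiple of $D(p,q)$. In fact the argument gives the sharper equality $\algon = (1 + 2/\epsilon)\algoff$, which in particular implies the claimed $\Theta(1/\epsilon)$ relationship.
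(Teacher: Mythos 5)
Your proposal is correct and follows essentially the same argument as the paper: compute each matched pair's online cost using the matching time $t(p) + D(p,q)/\epsilon$, observe that the delay and connection costs collapse to $\left(1 + \frac{2}{\epsilon}\right)D(p,q)$, and sum over pairs. No substantive difference from the paper's proof.
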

\begin{proof}
    Assume two requests $p$ and $q$ were matched by $\alg$ at time $t$.
    Assume w.l.o.g that $t(p) \geq t(q)$.
    The contribution of this pair to $\algon$, is
    \begin{equation*}
        \begin{split}
            t - t(p) + t - t(q) + d(x(p),x(q)) &= \\
            t - t(p) + t - t(p) + t(p) - t(q) + d(x(p),x(q)) &= 
            2(t - t(p)) + D(p,q)    
        \end{split}
    \end{equation*}
    On the contrary, the contribution of this pair to $\algoff$, is just $D(p,q)$.

    Note that $t$ is the time that $q$ was on $p$'s hemisphere, 
    so $t = t(p) + \frac{D(p,q)}{\epsilon}$,
    hence the ratio between $\algon$ and $\algoff$ for this pair is
    \begin{equation*}
        \frac{2\frac{D(p,q)}{\epsilon} + D(p,q)}{D(p,q)} = 1 + \frac{2}{\epsilon}
    \end{equation*}
    Summing over all matched pairs we get 
    $\frac{\algon}{\algoff} = 1 + \frac{2}{\epsilon} = \Theta\left(\frac{1}{\epsilon}\right)$.
\end{proof}

Finally we prove Theorem \ref{thm_main} using the inequalities proven in the previous lemmas.
\begin{proof}[Proof of Theorem \ref{thm_main}]
    Combining Lemma~\ref{lem_opts}, Lemma~\ref{lem_algoff_to_opt} and Lemma~\ref{lem_epsilon} we have
    \begin{equation*}
        \algon \leq O\left(\frac{1}{\epsilon}\right)\algoff \leq O\left(\frac{1}{\epsilon}m^{\log\left(\frac{3+\epsilon}{2}\right)}\right) \opt        
    \end{equation*}
    Hence, $\alg(\epsilon)$ is $O\left(\frac{1}{\epsilon}m^{\log\left(\frac{3+\epsilon}{2}\right)}\right)$-competitive.
\end{proof}

In Appendix~\ref{sec_lower_bound} we show that the analysis is tight, 
and that the competitive ratio is indeed a function of $m$, 
and not of $n$ (the size of the metric space).
In Appendix~\ref{sec_time} we show that growing hemispheres in space while ignoring the time axis,
and other similar hacks, only worsen the competitive ratio.

\section{The Bipartite Case}\label{sec_bipartite}
For the bipartite case, we suggest the same algorithm as in the monochromatic case.
The only difference is that we match a request 
$q$ to a request $p$ as soon as $q$ is found on the boundary of $p$'s hemisphere, 
\textbf{and that $q$ and $p$ do not belong to the same class}.

\begin{algorithm}
    \caption{A Deterministic Algorithm for MBPMD on General Metrics\label{alg:bipartite}}
    \begin{algorithmic}[1]
        \Procedure{ALG-B}{$\epsilon$}
        \State \textbf{At every moment $t$:}
        \State Add the new requests that arrive at time $t$
        \ForEach{unmatched request $p$}
            \ForEach{unmatched request $q \neq p$}
                \If{$t(p) \geq t(q)$ \textbf{and} $t = t(p) + \frac{D\left(x(p),x(q)\right)}{\epsilon}$ \textbf{and} $class(q) \neq class(p)$}
                \State \textbf{match}$(p,q)$
                \EndIf
            \EndFor
        \EndFor
     \EndProcedure
    \end{algorithmic}
\end{algorithm}
\subsection{Analysis}
We prove the following theorem:
\begin{theorem}\label{thm_main_bipartite}
    $\algb(\epsilon)$ is $O\left(\frac{1}{\epsilon}m^{\log\left(\frac{3+\epsilon}{2}\right)}\right)$-competitive.
\end{theorem}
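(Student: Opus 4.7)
The plan is to mirror the proof of Theorem~\ref{thm_main} line by line, verifying that each ingredient survives the additional bipartite constraint. First, Lemma~\ref{lem_opts} (the reduction from the online delay problem to offline matching in $\mathcal{M}_T$) applies verbatim: its argument never uses any monochromatic property, so OPT for MBPMD equals the cost of an optimal bipartite matching of the requests viewed as points in $\mathcal{M}_T$. Likewise, Observation~\ref{obs1}, Lemma~\ref{lem_cycles}, Lemma~\ref{lem_local_opt_and_greedy} and Corollary~\ref{cor_cycles} are purely combinatorial statements about cycle decompositions and about the locality of the algorithm's behaviour on subsets of vertices; they extend once one notes that the added color check only restricts which matches ALG-B is allowed to form, and never forces it to match something different if a pair passes the check.

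The only substantive step is the bipartite analogue of Lemma~\ref{lem_main}. I would first handle the color bookkeeping, then run the same case analysis on the relative order of $t(c)$, $t(a)$ and $t(b)$. Restrict to a single cycle $C$ of the union of ALG-B's matching and OPT (via Corollary~\ref{cor_cycles}); since both matchings are bipartite, colors strictly alternate along $C$. Consider the last two ALG-B pairs $(a,b)$ and $(c,d)$ in $C$: the two alternating paths $P_{ca}$ and $P_{db}$ each begin and end with an OPT edge (the adjacent edges on either side are the ALG-B edges), so they have odd edge-length and therefore connect opposite-color endpoints. After labeling the two endpoints of the second ALG-B edge appropriately, this forces $(a,c)$ and $(b,d)$ to be valid bipartite pairs. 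The hemisphere-timing arguments from Lemma~\ref{lem_main} now go through verbatim, yielding $D(a,b) \leq (1+\epsilon) D(a,c)$ and $D(a,b) \leq (1+\epsilon) D(b,d)$: if $a$ had reached $c$'s hemisphere (or $b$ had reached $d$'s) before $(a,b)$ was matched, then ALG-B -- which is allowed to match those pairs -- would have matched them first, contradicting the choice of $(a,b)$ as the earliest matched pair. The corresponding bounds via $(a,d)$ and $(b,c)$ from the monochromatic proof do not survive (these are same-color pairs), but Lemma~\ref{lem_algoff_to_opt} does not use them.

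With these two inequalities in hand the remainder carries over with no essential change. The LP-style analysis in Lemma~\ref{lem_algoff_to_opt} uses only the constraints $1 - \alpha - \beta \leq (1+\epsilon)\alpha$ and $1 - \alpha - \beta \leq (1+\epsilon)\beta$, so the polyhedron and its vertices $(1,0)$, $(0,1)$, $(\tfrac{1}{3+\epsilon}, \tfrac{1}{3+\epsilon})$ are unchanged, the recurrence is identical, and Lemma~\ref{lem_recurrence} with $\gamma = 3 + \epsilon$ yields $\algboff \leq O\bigl(m^{\log((3+\epsilon)/2)}\bigr)\opt$. Lemma~\ref{lem_epsilon} is blind to colors -- its proof only uses that ALG-B matches a pair at the moment the relevant hemisphere reaches the other endpoint -- and gives $\algbon = \Theta(1/\epsilon)\,\algboff$. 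Chaining the three bounds produces the stated competitive ratio. The main obstacle is precisely the color-parity observation in the bipartite version of Lemma~\ref{lem_main}; it is short, but without it the hemisphere contradictions are vacuous, since ALG-B is not free to match arbitrary pairs.
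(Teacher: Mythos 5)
Your proposal is correct and follows essentially the same structure as the paper's proof: observe that Lemmas~\ref{lem_opts}, \ref{lem_cycles}, \ref{lem_local_opt_and_greedy}, \ref{lem_epsilon} and Corollary~\ref{cor_cycles} carry over unchanged, supply the color-parity argument (your odd-path-length observation is exactly the paper's Lemma~\ref{lem_classes}), note that this makes $(a,c)$ and $(b,d)$ legal pairs so the hemisphere-timing argument yields the two needed inequalities (the paper's Lemma~\ref{lem_main_bipartite}), and then rerun the LP/recurrence analysis of Lemma~\ref{lem_algoff_to_opt}. You also correctly identify that the $(a,d)$, $(b,c)$ bounds from the monochromatic Lemma~\ref{lem_main} are lost but unused, which is the same point the paper is implicitly making by only stating the two surviving inequalities.
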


Observation~\ref{obs1}, Lemma~\ref{lem_cycles} and Lemma~\ref{lem_local_opt_and_greedy} hold for the bipartite case as well, 
therefore using Corollary~\ref{cor_cycles} we may assume that the union of $\algb$ and $\opt$ forms a single cycle.

The key difference in the analysis for this case,
is that when we consider the last four requests to be matched,
not every two of them could have been matched to each other.
Therefore Lemma~\ref{lem_main} does not hold, but a weaker yet similar result does.

Consider the last two pairs of requests to be matched by $\algb$.
Name them $(a,b)$ and $(c,d)$, 
and assume w.l.o.g that $(a,b)$ were matched at time $t_{ab}$,
and $(c,d)$ at $t_{cd} \geq t_{ab}$.
Also, assume w.l.o.g that $t(a) \leq t(b)$.
\begin{lemma}\label{lem_main_bipartite}
    If $class(a) = class(d) \neq class(b) = class(c)$ then
    \begin{enumerate}
        \item $D(a,b) \leq (1+\epsilon)D(a,c)$\label{lem_main_bipartite:1}
        \item $D(a,b) \leq (1+\epsilon)D(b,d)$\label{lem_main_bipartite:2}
    \end{enumerate}
\end{lemma}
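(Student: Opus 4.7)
The plan is to mirror the proof of Lemma~\ref{lem_main} but restrict attention to the two alternative pairings that are bipartite-valid. Since $\mathrm{class}(a)=\mathrm{class}(d)\neq\mathrm{class}(b)=\mathrm{class}(c)$, the only cross-pairs in $\{a,b\}\times\{c,d\}$ that $\algb$ could ever have matched instead of $(a,b)$ are $(a,c)$ and $(b,d)$; the pairs $(a,d)$ and $(b,c)$ are monochromatic and therefore never candidates. This is exactly why only two of the four inequalities from Lemma~\ref{lem_main} survive in this setting. In both parts I will exploit the fact that $(a,b)$ was matched before any other bipartite-valid pair using $a$ or $b$, and then use the hemisphere-matching time identity together with the triangle-type bound $|t(u)-t(v)|\leq D(u,v)$ coming from the definition of $\mathcal{M}_T$.

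For part~(\ref{lem_main_bipartite:1}), since $t(b)\geq t(a)$ the pair $(a,b)$ is matched at time $t_{ab}=t(b)+D(a,b)/\epsilon$. I split on whether $t(c)\geq t(a)$ or $t(c)<t(a)$. In either case, because $(a,c)$ is a legal bipartite pair that $\algb$ did not match earlier, the algorithm would have matched it no sooner than $\max(t(a),t(c))+D(a,c)/\epsilon$, so
\[
  t(b) + \frac{D(a,b)}{\epsilon} \;\leq\; \max(t(a),t(c)) + \frac{D(a,c)}{\epsilon}.
\]
When $t(c)\geq t(a)$, bounding $t(c)-t(b)\leq t(c)-t(a)\leq D(a,c)$ yields $D(a,b)\leq(1+\epsilon)D(a,c)$; when $t(c)<t(a)$, the right side becomes $t(a)+D(a,c)/\epsilon$ and $t(a)\leq t(b)$ makes the $\epsilon$ term non-positive, giving $D(a,b)\leq D(a,c)\leq(1+\epsilon)D(a,c)$.

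For part~(\ref{lem_main_bipartite:2}), the reasoning is symmetric with $(b,d)$ in place of $(a,c)$: since $(b,d)$ is bipartite-valid, $\algb$ would have matched it no sooner than $\max(t(b),t(d))+D(b,d)/\epsilon$, and the same algebraic identity gives
\[
  t(b) + \frac{D(a,b)}{\epsilon} \;\leq\; \max(t(b),t(d)) + \frac{D(b,d)}{\epsilon}.
\]
If $t(d)\geq t(b)$, then $t(d)-t(b)\leq D(b,d)$ gives $D(a,b)\leq(1+\epsilon)D(b,d)$; if $t(d)<t(b)$, the right side is $t(b)+D(b,d)/\epsilon$ and the bound collapses to $D(a,b)\leq D(b,d)$.

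The main thing to watch is not the derivation itself, which is essentially the monochromatic argument applied selectively, but rather confirming that losing the two inequalities involving $(a,d)$ and $(b,c)$ does not weaken the downstream analysis. In fact the proof of Lemma~\ref{lem_algoff_to_opt} uses exactly $D(a,b)\leq(1+\epsilon)D(a,c)$ to obtain $1-\alpha-\beta\leq(1+\epsilon)\alpha$ and $D(a,b)\leq(1+\epsilon)D(b,d)$ to obtain $1-\alpha-\beta\leq(1+\epsilon)\beta$, so the same recurrence and Lemma~\ref{lem_recurrence} apply verbatim, yielding Theorem~\ref{thm_main_bipartite}. Thus no genuinely new obstacle appears; the entire work is localized to identifying which two of the four hemisphere comparisons remain meaningful under the bipartite constraint.
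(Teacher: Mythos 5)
Your proof is correct and is essentially the paper's own argument: the paper simply omits the proof, noting it coincides with the proof of Lemma~\ref{lem_main} restricted to the relevant cases, which is exactly what you reproduce (your $\max(t(a),t(c))$ and $\max(t(b),t(d))$ formulation just packages the paper's two case splits into one inequality before splitting). Your closing remark about how these two inequalities suffice for Lemma~\ref{lem_algboff_to_opt} matches the paper's use of Lemma~\ref{lem_classes} as well, so nothing is missing.
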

We omit the proof of this lemma as it is the same as the proof of Lemma~\ref{lem_main} for the relevant cases.

Considering Figure~\ref{figure_analysis} we have the following lemma.
\begin{lemma}\label{lem_classes}
    $class(a) = class(d) \neq class(b) = class(c)$
\end{lemma}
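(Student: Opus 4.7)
The plan is to combine the alternating structure of the cycle formed by the union of the matchings (Observation~\ref{obs1}) with the bipartite nature of both $\algb$ and $\opt$. By Corollary~\ref{cor_cycles} we may focus on a single cycle $C$ whose edges alternate between $\algb$ and $\opt$, and in which $(a,b)$ and $(c,d)$ are two $\algb$-edges. Because every edge of $C$ belongs to a bipartite matching, every edge of $C$ joins vertices of opposite classes, so the two classes strictly alternate as we traverse $C$.

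Next I would examine the two paths $P_{db}$ and $P_{ca}$ that remain after deleting $(a,b)$ and $(c,d)$ from $C$. Since the cycle's edges alternate between $\algb$ and $\opt$, the two edges of $C$ adjacent to the $\algb$-edge $(a,b)$ must be $\opt$-edges, and similarly for the edges adjacent to $(c,d)$. Hence $P_{db}$ begins at $b$ with an $\opt$-edge, ends at $d$ with an $\opt$-edge, and alternates in between; this forces the number of edges on $P_{db}$ to be odd. The same argument shows $P_{ca}$ has odd length.

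Finally, combining the odd-length parities with the class-alternation along $C$ finishes the job: traversing the odd-length path $P_{db}$ from $b$ arrives at a vertex of the opposite class, so $class(d) \neq class(b)$, and combined with $class(a)\neq class(b)$ (from the bipartite $\algb$-edge $(a,b)$) this gives $class(a)=class(d)$. Applying the same reasoning to $P_{ca}$ yields $class(c) = class(b)$, and these two identities together with $class(a)\neq class(b)$ give exactly $class(a) = class(d) \neq class(b) = class(c)$. There is no real obstacle in this argument; the only point requiring a touch of care is verifying that the labeling of $c$ and $d$ (i.e., which endpoint of the $\algb$-edge $(c,d)$ lies on which side of $(a,b)$ in the cycle) is the one implicit in the definitions of $P_{db}$ and $P_{ca}$, but this is forced by the requirement that these paths avoid the two removed $\algb$-edges.
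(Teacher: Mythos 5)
Your proposal is correct and follows essentially the same route as the paper: use the alternation of $\algb$- and $\opt$-edges on the cycle to get that the path(s) joining the two $\algb$-edges have odd length, note that classes alternate across every edge, and conclude that the path endpoints have opposite classes, which together with $class(a)\neq class(b)$ and $class(c)\neq class(d)$ yields $class(a)=class(d)\neq class(b)=class(c)$. The only cosmetic difference is that you argue the parity for both $P_{ca}$ and $P_{db}$, while the paper derives it only for $P_{ca}$ and finishes with the bipartiteness of the two $\algb$-edges.
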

\begin{proof}
    From the alternation property of Observation~\ref{obs1} we have that the number of edges along $P_{ca}$ must be odd 
    (since the number of $\opt$ edges along $P_{ca}$ must be one more than $\algb$ edges along $P_{ca}$).
    Moreover, the classes of the requests along $P_{ca}$ alternate as well 
    (since every edge must match requests of different classes).
    Since there are odd number of edges along $P_{ca}$, 
    there are odd number of class alternations along $P_{ca}$, 
    so the class of the last request along $P_{ca}$ (that is $class(c)$)
    must be different from the class of the first request along $P_{ca}$ (that is $class(a)$).
    Thus $class(c) \neq class(a)$ and of course $class(a) \neq class(b)$, $class(c) \neq class(d)$, 
    so $class(a) = class(d) \neq class(b) = class(c)$.
\end{proof}
Using Lemma~\ref{lem_classes} and Lemma~\ref{lem_main_bipartite} 
we repeat the proof of Lemma~\ref{lem_algoff_to_opt} and achieve the following result:
\begin{lemma}\label{lem_algboff_to_opt}
    $\algboff \leq O\left(m^{\log\left(\frac{3+\epsilon}{2}\right)}\right) \opt$
\end{lemma}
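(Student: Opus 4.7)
The plan is to mirror the proof of Lemma~\ref{lem_algoff_to_opt} almost verbatim, leveraging the fact that the two ingredients it truly needs in the bipartite setting have already been provided as Lemma~\ref{lem_main_bipartite} and Lemma~\ref{lem_classes}. First I invoke Corollary~\ref{cor_cycles} (whose proof carries over, as noted in the text) to reduce to the case where the union $G = \algb \cup \opt$ forms a single alternating cycle. I then isolate the last two pairs $(a,b)$ and $(c,d)$ matched by $\algb$, with the labelling convention $t_{ab} \leq t_{cd}$ and $t(a) \leq t(b)$, exactly as before.

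The first and only nontrivial new step is to argue that the cycle has the shape shown in Figure~\ref{figure_analysis}, with alternating paths $P_{ca}$ (from $c$ to $a$) and $P_{db}$ (from $d$ to $b$), rather than with $c$ connected via an alternating path to $b$. In the monochromatic proof this labelling was a free convention, but in the bipartite case it is forced: Lemma~\ref{lem_classes} gives $class(a) = class(d) \neq class(b) = class(c)$, so the $\opt$-edge incident to $a$ must go to a vertex of the opposite class, and tracing the alternating cycle starting from the $\opt$-edge at $a$ must eventually reach $c$ (not $d$), since $d$ has the same class as $a$. This commits us precisely to the two inequalities supplied by Lemma~\ref{lem_main_bipartite}, namely $D(a,b) \leq (1+\epsilon) D(a,c)$ and $D(a,b) \leq (1+\epsilon) D(b,d)$, which are exactly the two used (out of the four of Lemma~\ref{lem_main}) in the proof of Lemma~\ref{lem_algoff_to_opt}.

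From here the derivation is identical. Setting $\alpha = \sum_{e \in P_{ca}} D(e)$, $\beta = \sum_{e \in P_{db}} D(e)$, and scaling so that $T = 1$, the triangle inequality yields $\alpha \geq D(a,c)$ and $\beta \geq D(b,d)$, which combined with the two bipartite inequalities give
\begin{equation*}
    1 - \alpha - \beta = D(a,b) \leq (1+\epsilon)\alpha, \qquad 1 - \alpha - \beta \leq (1+\epsilon)\beta.
\end{equation*}
Letting $2i$ be the number of vertices on $P_{ca}$ (an even number, since $P_{ca}$ has an odd number of edges by Lemma~\ref{lem_classes}), the same linear-programming argument as in Lemma~\ref{lem_algoff_to_opt} places the minimum of $\alpha f(2i) + \beta f(m-2i)$ at a vertex of the polyhedron, i.e. at $(1,0)$, $(0,1)$, or $(\tfrac{1}{3+\epsilon}, \tfrac{1}{3+\epsilon})$, producing the recurrence of Lemma~\ref{lem_recurrence} with $\gamma = 3+\epsilon$. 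Applying that lemma gives $f(m) = \Omega(1 / m^{\log((3+\epsilon)/2)})$, and hence $\algboff/\opt \leq 2/f(m) - 1 = O(m^{\log((3+\epsilon)/2)})$.

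The main (and really only) obstacle is justifying the cycle orientation; once Lemma~\ref{lem_classes} is used to fix it, the bipartite case inherits exactly the machinery of the monochromatic case, and the remaining arithmetic is identical.
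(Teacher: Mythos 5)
Your proposal is correct and follows essentially the same route as the paper, which itself proves this lemma by invoking Lemma~\ref{lem_classes} and Lemma~\ref{lem_main_bipartite} and repeating the argument of Lemma~\ref{lem_algoff_to_opt}; you correctly identify that only the inequalities $D(a,b) \leq (1+\epsilon)D(a,c)$ and $D(a,b) \leq (1+\epsilon)D(b,d)$ are needed and that the class structure of the alternating cycle justifies applying the bipartite lemma. No gaps.
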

The main theorem for the bipartite case now follows:
\begin{proof}[Proof of Theorem~\ref{thm_main_bipartite}]
    Lemma~\ref{lem_epsilon} and Lemma~\ref{lem_opts} hold for $\algb$ as well,
    thus from Lemma~\ref{lem_algboff_to_opt} we have
    \begin{equation*}
        \algbon \leq O\left(\frac{1}{\epsilon}\right)\algboff \leq O\left(\frac{1}{\epsilon}m^{\log\left(\frac{3+\epsilon}{2}\right)}\right) \opt
    \end{equation*}
    Hence, $\algb(\epsilon)$ is $O\left(\frac{1}{\epsilon}m^{\log\left(\frac{3+\epsilon}{2}\right)}\right)$-competitive.
\end{proof}

\section{Concluding Remarks and Open Problems}\label{sec_rem}

In this paper we presented the first sub-linear competitive deterministic
algorithm for \textit{Minimum-Cost Perfect Matching with Delays} as a function of $m$, 
the number of requests. 
We also provided a similar algorithm for the problem
of \textit{Minimum-Cost Bipartite Perfect Matching with Delays}
achieving the same competitive ratio.

One open problem is to decide if a deterministic algorithm with a better competitive ratio exists,
in particular a $\polylog(m)$-competitive one, 
by showing a lower bound or providing an algorithm for the problem.
In addition, 
the problem of finding a sub-linear in $n$ 
competitive deterministic algorithm is still open.

\bibliographystyle{plain}
\bibliography{references}
\appendix
\section{The competitive ratio is a function of \texorpdfstring{$m$}{m}}\label{sec_lower_bound}
Following Section~\ref{sec_analysis}, a question arises - whether Theorem~\ref{thm_main} can be modified to prove $\frac{\alg}{\opt} \leq O\left(\frac{1}{\epsilon}n^{\log\left(\frac{3}{2}+\epsilon\right)}\right)$ for a finite metric space of size $n$.

We show that for every $\alg\left(\epsilon\right)$ there is an instance with $n = 1$ 
for which $\frac{\alg}{\opt} \geq \lowerbound$.
The instance we give is essentially the example given by~\cite{ReingoldT81}, 
over the time axis, and with distances scaled to consider the progress of time.
Let $k = \log(m)$ and consider Figure~\ref{figure_lower_bound} which describes a series of requests with the recurrence relation
\begin{equation}\label{eqn_rec_relation}
    a_i = \frac{b_i}{1 + \epsilon}, \; b_i = 2b_{i-1} + a_{i-1}, \;  b_1 = 1
\end{equation}
\begin{figure}[ht]
    \centering
    \scalebox{0.85}
    {
    \begin{tikzpicture}
    [request/.style={circle,fill=black,thick, inner sep=0pt,minimum size=4pt}]
    \tikzstyle{every node}=[font=\small]
    \draw[->] (0,-8) -- (13.5,-8) node[right] {$t$}; 

    \node (r1k1) [request]  at (0,1) {};
    \node (r2k1) [request] at (1.5,1) {};
    \draw[red, dashed] (r1k1) -- (r2k1);
    \draw[blue] (r1k1) to[out=45,in=135] (r2k1);

    \node (k1label) [label=left:{$k = 1$}, left of=r1k1] {};
    \node (r1k2) [request]  at (0,-1) {};
    \node (r2k2) [request] at (1.5,-1) {};
    \node (r3k2) [request] at (2.5,-1) {};
    \node (r4k2) [request] at (4,-1) {};
    \draw[red, dashed] (r1k2) -- (r2k2);
    \draw[blue] (r2k2) -- (r3k2);
    \draw[red, dashed] (r3k2) -- (r4k2);
    \draw[blue] (r1k2) to[out=45,in=135] (r4k2);
    
    \node (k2label) [label=left:{$k = 2$}, left of=r1k2] {};
    \node (r1k) [request]  at (0,-4.5) [label=below:$r_1$]{};
    \node (r2k) [request] at (1.5,-4.5) [label=below:$r_2$]{};
    \node (r3k) [request] at (2.5,-4.5) [label=below:$r_3$]{};
    \node (r4k) [request] at (4,-4.5) [label=below:$r_4$]{};
    \node (r5k) [request]  at (6.5,-4.5) [label=below:$r_5$]{};
    \node (r6k) [request] at (8,-4.5) [label=below:$r_6$]{};
    \node (r7k) [request] at (9,-4.5) [label=below:$r_7$]{};
    \node (r8k) [request] at (10.5,-4.5) [label=below:$r_8$]{};
    \node (rmk) [request] at (13.5,-4.5) [label=below:$r_m$]{};
    
    \draw[red, dashed] (r1k) -- (r2k) node[below, midway] {$b_1$};
    \draw[blue] (r2k) -- (r3k) node[below, midway] {$a_1$};
    \draw[red, dashed] (r3k) -- (r4k) node[below, midway] {$b_1$};
    \draw[blue] (r4k) -- (r5k) node[below, midway] {$a_2$};
    \draw[red, dashed] (r5k) -- (r6k) node[below, midway] {$b_1$};
    \draw[blue] (r6k) -- (r7k) node[below, midway] {$a_1$};
    \draw[red, dashed] (r7k) -- (r8k) node[below, midway] {$b_1$};
    \draw[black, loosely dotted] (r8k) -- (rmk);
    \draw[blue] (r1k) to[out=45,in=135] node[midway,fill=white] {$b_k$} (rmk);

    \draw[black, dotted] (r1k) to[out=-65,in=-115] node[midway,fill=white] {$b_2$} (r4k);
    \draw[black, dotted] (r5k) to[out=-65,in=-115] node[midway,fill=white] {$b_2$} (r8k);
    \draw[black, dotted] (r1k) to[out=-65,in=-115] node[midway,fill=white] {$b_3$} (r8k);

    \node (kmlabel) [label=left:{$k = \log{m}$}, left of=r1k] {};
\end{tikzpicture}
    }
    \caption[1]{A series of $m$ requests along the time axis with $n = 1$ and $\frac{\alg}{\opt} \geq \lowerbound$. \hspace{\textwidth} In blue is the matching produced by $\alg$, and in dashed red - a matching of cost $O(m)$.}
    \label{figure_lower_bound}
\end{figure}

\begin{lemma}\label{lem_lower_alg}
    $\alg$ matches $\left(r_2,r_3\right)\ldots\left(r_{m-2},r_{m-1}\right)$ and $\left(r_1,r_m\right)$.
\end{lemma}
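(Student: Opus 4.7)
The plan is to proceed by strong induction on the level $k$, leveraging the self-similar structure of the instance. Before the induction I would recast ALG in event-based form: since $n=1$, $D(r_i, r_j) = |T_i - T_j|$ (write $T_\ell = t(r_\ell)$ for short), so the algorithm matches a pair $(r_i, r_j)$ with $T_i \leq T_j$ at time $\tau(r_i, r_j) := T_j + (T_j - T_i)/\epsilon$, provided neither endpoint has been matched earlier. It therefore suffices to sort pairs by $\tau$, process them in order, and skip those whose endpoints are already taken; ties are resolved adversarially, as permitted by the paper's convention after Algorithm~\ref{alg:mono}.

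The base case $k=1$ is immediate since the only admissible match is $(r_1, r_2) = (r_1, r_m)$. For the inductive step, I would split the instance into its left and right halves $B_1 := \{r_1, \ldots, r_{m/2}\}$ and $B_2 := \{r_{m/2+1}, \ldots, r_m\}$; each is a copy of a level-$(k-1)$ instance (of total length $b_{k-1}$), joined by the central spacing $T_{m/2+1} - T_{m/2} = a_{k-1}$. The heart of the argument is the algebraic identity
\begin{equation*}
    \tau(r_1, r_{m/2}) \;=\; T_{m/2} + \frac{b_{k-1}}{\epsilon} \;=\; T_{m/2} + a_{k-1}\Bigl(1 + \frac{1}{\epsilon}\Bigr) \;=\; T_{m/2+1} + \frac{a_{k-1}}{\epsilon} \;=\; \tau(r_{m/2}, r_{m/2+1}),
\end{equation*}
which follows from the defining recurrence $b_{k-1} = a_{k-1}(1+\epsilon)$. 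Thus at each level the central pair and the outer pair of the left half fire at exactly the same moment, and adversarial tie-breaking selects the central pair.

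I would then argue four things. First, every intra-block event depends only on intra-block arrival times, so by the inductive hypothesis applied to $B_1$ (respectively $B_2$) the in-block events fire in the same order as in a standalone run and realize the claimed inner matches, culminating in the outer pair $(r_1, r_{m/2})$ (respectively $(r_{m/2+1}, r_m)$) at the strictly largest event time within that block. Second, among inter-block pairs the central pair strictly minimizes $\tau$: any other cross pair $(r_i, r_j) \in B_1 \times B_2$ with $(i,j) \neq (m/2, m/2+1)$ has $\tau(r_i, r_j) > \tau(r_{m/2}, r_{m/2+1})$ by a direct monotonicity check in $T_j$ and $-T_i$. Combined with the tie above, this places the central event strictly between every in-block inner event and every in-block outer event. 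Third, when the central event fires, neither $r_{m/2}$ nor $r_{m/2+1}$ has yet been touched, since the inner matches guaranteed by induction inside each $B_i$ do not involve that block's leftmost or rightmost request; so the central pair is matched. Fourth, this preempts both in-block outer events together with every remaining cross event, leaving only $r_1$ and $r_m$ unmatched; their pair fires at $\tau(r_1, r_m) = T_m + b_k/\epsilon$ and is matched. Collecting the matches yields exactly $\{(r_{2i}, r_{2i+1}) : 1 \leq i \leq m/2 - 1\} \cup \{(r_1, r_m)\}$, as claimed.

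The delicate part, and the main obstacle I anticipate, is the temporal bookkeeping in step two: one must verify uniformly that every non-central cross event has strictly larger $\tau$ than the central one and that each in-block inner event stays strictly below the central time, so that the tie-broken central event really does fit between the induction's inner and outer phases in each block. Both facts reduce to direct comparisons using the recurrences $b_j = 2 b_{j-1} + a_{j-1}$ and $b_j = a_j(1+\epsilon)$, but must be spelled out carefully; once in hand, they let the induction glue the two halves together cleanly.
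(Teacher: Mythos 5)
Your overall skeleton mirrors the paper's proof (induction on $k$, split into the two halves, the identity $\tau(r_1,r_{m/2})=\tau(r_{m/2},r_{m/2+1})$ with adversarial tie-breaking toward the central pair, and $(r_1,r_m)$ as the leftover pair), but there is a genuine gap in your steps two and four. The ordering claim you rely on --- that \emph{every} in-block inner event fires strictly before the central cross event, so that after the central match ``only $r_1$ and $r_m$'' remain and the central match ``preempts every remaining cross event'' --- is simply false for the right half unless $\epsilon$ is small. Concretely, take $\epsilon=1$ and $k=3$ ($m=8$): then $b_1=1,\ a_1=\tfrac12,\ b_2=\tfrac52,\ a_2=\tfrac54$, the requests sit at times $0,1,\tfrac32,\tfrac52,\tfrac{15}4,\tfrac{19}4,\tfrac{21}4,\tfrac{25}4$, the central event $(r_4,r_5)$ fires at $t_0=b_2(1+\tfrac1\epsilon)=5$, but the right half's inner pair $(r_6,r_7)$ fires only at $\tfrac{21}4+\tfrac12=5.75>t_0$. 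In general the requirement $a_{k-1}+b_{k-2}+\tfrac{b_{k-2}}{\epsilon}<\tfrac{b_{k-1}}{\epsilon}$ fails for all $\epsilon$ above roughly $0.6$, while the lemma is claimed (and true) for every $\epsilon>0$. Hence step four does not follow as stated: after the central match, right-half inner requests can still be unmatched, and the cross events $(q,r_1)$ for such $q$ are \emph{not} preempted by the central match.

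What is actually needed (and what closes the hole) is to show that after $t_0$ the right half continues to evolve exactly as a standalone level-$(k-1)$ run even though $r_{m/2+1}$ has been removed and $r_1$ is still available: for every right-half request $q\neq r_m$, the event $(q,r_1)$ fires at $t(q)\bigl(1+\tfrac1\epsilon\bigr)$, whereas $q$'s inner match fires no later than $t(q)+\tfrac{b_j}{\epsilon}$ for some $j\le k-2$, and $t(q)\ge b_{k-1}+a_{k-1}>b_j$, so every such cross event comes strictly after $q$ has already been matched; removing $r_{m/2+1}$ early is harmless because the standalone run never matches it before its final event. The paper avoids your false ordering altogether: it only orders the \emph{left} half's events against $t_0$ (all its inner matches finish by $t_0$, since the earliest cross event is the central one at exactly $t_0$), and it treats the right half through the induction hypothesis together with the observation that the first capture of an outside request by the right half is the central pair itself, with $r_m$'s hemisphere reaching $r_{m/2+1}$ only at $b_k+\tfrac{b_{k-1}}{\epsilon}>t_0$. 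So your write-up needs either the restriction to small $\epsilon$ made explicit (which would weaken the statement) or, better, the replacement of the ordering claim by the cross-event timing check above.
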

\begin{proof}
    We prove the lemma by induction on $k$.
    
    \

    \textbf{Base Case ($k = 1$):} The only point that $r_2 = r_m$ can be matched to is $r_1$.
    
    \

    \textbf{Inductive step:} 
    Assume the claim holds for $k - 1$.
    We start by showing that $\alg$ will match the pairs $(r_2, r_3), \ldots, (r_{\frac{m}{2} - 2}, r_{\frac{m}{2} - 1})$.

    Let $t_0 = b_{k-1} + \frac{b_{k-1}}{\epsilon}$, this is the time that the hemisphere of $r_{\frac{m}{2}}$ reaches $r_1$ 
    unless $r_{\frac{m}{2}}$ is matched by another request at time $t < t_0$.
    By the induction hypothesis, 
    unless the hemisphere of some $r_i$ with $i > \frac{m}{2}$ reaches past $r_{\frac{m}{2}}$ by $t < t_0$,
    the hemisphere of $r_{\frac{m}{2}}$ will reach $r_1$, after the pairs $(r_2, r_3)$,\ldots,$(r_{\frac{m}{2} - 2}, r_{\frac{m}{2} - 1})$ are matched.
    Notice that the hemisphere of $r_{\frac{m}{2}}$ may reach $r_1$ only by time $t_0$
    and the hemisphere of $r_{\frac{m}{2} + 1}$ may reach $r_{\frac{m}{2}}$ only by 
    \begin{equation*}
        t_1 = b_{k-1} + a_{k-1} + \frac{a_{k-1}}{\epsilon} = b_{k-1} + \frac{b_{k-1}}{1+\epsilon}(1 + \frac{1}{\epsilon}) = t_0
    \end{equation*}
    Therefore, the hemisphere of $r_{\frac{m}{2} + 1}$ may reach $r_{\frac{m}{2}}$ 
    only after $(r_2, r_3)$,\ldots,$(r_{\frac{m}{2} - 2}, r_{\frac{m}{2} - 1})$ are matched.
    Obviously for every $i > \frac{m}{2} + 1$ the hemisphere of $r_i$ would not reach past $r_{\frac{m}{2}}$ by $t_0$ if the hemisphere of $r_{\frac{m}{2} + 1}$ does not, 
    therefore $(r_2, r_3)$,\ldots,$(r_{\frac{m}{2} - 2}, r_{\frac{m}{2} - 1})$ are matched by $\alg$ by time $t_0$.
    
    Considering $r_{\frac{m}{2} + 1}$,\ldots,$r_{m}$, again by the induction hypothesis we have that
    unless $r_{m}$ is matched by another request before its hemisphere reaches $r_{\frac{m}{2} + 1}$, 
    $\alg$ will match the pairs $(r_{\frac{m}{2} + 2}, r_{\frac{m}{2} + 3})$, \ldots, $(r_{m-2}, r_{m-1})$. 
    Indeed, there is no request after $r_m$, thus $\alg$ will match these pairs,
    and we are left to address the requests $r_1$, $r_{\frac{m}{2}}$, $r_{\frac{m}{2} + 1}$, $r_m$.
    
    Observe that the hemisphere of $r_m$ reaches $r_{\frac{m}{2} + 1}$ 
    at $t = b_k + \frac{b_{k-1}}{\epsilon} > (1 + \frac{1}{\epsilon})b_{k-1} = t_1 = t_0$,
    hence $\alg$ will match the pair $(r_{\frac{m}{2} + 1}, r_{\frac{m}{2}})$.
    The remaining and last pair to be matched by $\alg$ is $(r_1, r_m)$ of course.
\end{proof}

The cost of $\opt$ is at most $O(m)$ since $D(u,v) = b_1 = 1$ 
for every pair $(u,v)$ in the matching $\left(r_1,r_2\right)\ldots\left(r_{m-1},r_m\right)$.
The cost of matching $r_1$ to $r_m$ is $b_k$.
Out of the pairs $\left(r_2,r_3\right)\ldots\left(r_{m-2},r_{m-1}\right)$ 
there are $2^i$ pairs with distance $a_{k-i-1}$ between the two end-points, 
for $0 \leq i \leq k - 2$.
Therefore $\algoff = b_k + \sum_{i=0}^{k-2}2^{i}a_{k-i-1}$.

The mutual recurrence relation (\ref{eqn_rec_relation}) solves to
\begin{equation}
    a_i = \frac{\left(2 + \frac{1}{1 + \epsilon}\right)^i}{2\epsilon + 3}, \; b_i = \left(2 + \frac{1}{1 + \epsilon}\right)^{i-1}
\end{equation}
Therefore
\begin{align*}
    \algoff = b_k + \sum_{i=1}^{k-1}2^{k-1-i}a_{i} &> \sum_{i=1}^{k-1}\frac{\left(2 + \frac{1}{1 + \epsilon}\right)^i}{2\epsilon + 3}2^{k-1-i} \\
    &=\frac{2^{k-1}}{2\epsilon + 3}\sum_{i=1}^{k-1}\left(1 + \frac{1}{2\left(1 + \epsilon\right)}\right)^i \\
    &=\frac{2^{k-1}}{2\epsilon + 3}\left(2\epsilon + 3\right)\left(\left(1 + \frac{1}{2\left(1 + \epsilon\right)}\right)^{k-1} - 1 \right) \\
    &=\left(2+\frac{1}{1+\epsilon}\right)^{k-1} - 2^{k-1}
\end{align*}
Hence, 
\begin{equation*}
    \frac{\algoff}{\opt} \geq \frac{\left(2+\frac{1}{1+\epsilon}\right)^{k-1} - 2^{k-1}}{2^k} =
    \Omega\left(m^{\log\left(1+\frac{1}{2(1+\epsilon)}\right)}\right) = \Omega\left(m^{\log\left(\frac{3+2\epsilon}{2+2\epsilon}\right)}\right)
\end{equation*}
Finally, from Lemma~\ref{lem_epsilon} we have 
\begin{equation*}
    \frac{\algon}{\opt} = \Omega\left(\frac{1}{\epsilon}\right)\frac{\algoff}{\opt} \geq \lowerbound
\end{equation*}

\section{Time must be considered}\label{sec_time}
A simple hack that may handle the instance given in Appendix~\ref{sec_lower_bound},
is to match immediately two points that are located at the same position in space.
Obviously this will not handle some very similar instances,
generated by small perturbations of the positions of the requests.

A simple extension of this idea is to ignore the time axis, 
so that $p$ and $q$ will be matched as soon as $t \geq \min\left(t(p), t(q)\right) + \frac{d(x(p),x(q))}{\epsilon}$,
i.e. the requests grow spheres only in space, but not in time, 
and they are matched to each other as soon as one of them is in the sphere of the other.

The instance in Figure \ref{figure_lower_bound_2} shows that the competitive-ratio of this algorithm can be worse as $\Omega(m)$,
even though the size of the metric space is $n = 2$.
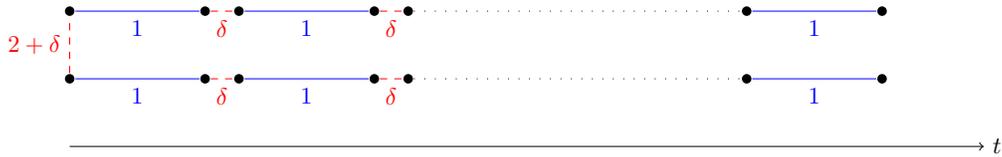
\begin{figure}[H]
    \centering
    \scalebox{0.9}
    {
    \begin{tikzpicture}
    [request/.style={circle,fill=black,thick, inner sep=0pt,minimum size=4pt}]
    \tikzstyle{every node}=[font=\small]
    \draw[->] (0,-2) -- (13.5,-2) node[right] {$t$}; 

    \node (r1ka) [request]  at (0,-1) {};
    \node (r2ka) [request] at (2,-1) {};
    \node (r3ka) [request] at (2.5,-1) {};
    \node (r4ka) [request] at (4.5,-1) {};
    \node (r5ka) [request]  at (5,-1) {};
    \node (rm_ka) [request] at (10,-1) {};
    \node (rmka) [request] at (12,-1) {};

    \node (r1kb) [request]  at (0,0) {};
    \node (r2kb) [request] at (2,0) {};
    \node (r3kb) [request] at (2.5,0) {};
    \node (r4kb) [request] at (4.5,0) {};
    \node (r5kb) [request]  at (5,0) {};
    \node (rm_kb) [request] at (10,0) {};
    \node (rmkb) [request] at (12,0) {};
    
    \draw[blue] (r1ka) -- (r2ka) node[below, midway] {$1$};
    \draw[red, dashed] (r2ka) -- (r3ka) node[below, midway] {$\delta$};
    \draw[blue] (r3ka) -- (r4ka) node[below, midway] {$1$};
    \draw[red, dashed] (r4ka) -- (r5ka) node[below, midway] {$\delta$};
    \draw[black, loosely dotted] (r5ka) -- (rm_ka);
    \draw[blue] (rm_ka) -- (rmka) node[below, midway] {$1$};

    \draw[blue] (r1kb) -- (r2kb) node[below, midway] {$1$};
    \draw[red, dashed] (r2kb) -- (r3kb) node[below, midway] {$\delta$};
    \draw[blue] (r3kb) -- (r4kb) node[below, midway] {$1$};
    \draw[red, dashed] (r4kb) -- (r5kb) node[below, midway] {$\delta$};
    \draw[black, loosely dotted] (r5kb) -- (rm_kb);
    \draw[blue] (rm_kb) -- (rmkb) node[below, midway] {$1$};

    \draw[red, dashed] (r1ka) -- (r1kb) node[left, midway] {$2 + \delta$};
\end{tikzpicture}
    }
    \caption[2]{In blue - the matching produced by the suggested algorithm, of cost $O(m)$.\hspace{\textwidth}In dashed red - an alternative matching of cost $O({1 + \delta}m)$.}
    \label{figure_lower_bound_2}
\end{figure}

Note that $\Omega(m)$ competitive-ratio will be achieved for this instance,
even for similar algorithms which do not consider time, 
such as matching $p$ to $q$ if $t(p) \geq t(q)$ and $t \geq t(p) + \frac{d(x(p),x(q))}{\epsilon}$,
or matching $p$ to $q$ if $t(p) \leq t(q)$ and $t \geq t(p) + \frac{d(x(p),x(q))}{\epsilon}$.

\end{document}